\documentclass{article}

\usepackage{arxiv}

\usepackage[utf8]{inputenc} 
\usepackage[T1]{fontenc}    
\usepackage{hyperref}       
\usepackage{url}            
\usepackage{booktabs}       
\usepackage{amsfonts}       
\usepackage{nicefrac}       
\usepackage{microtype}      
\usepackage{graphicx}
\usepackage{natbib}
\usepackage{doi}

\usepackage{ntheorem}
\theoremstyle{plain}

\newtheorem{lemma}{Lemma}[subsection]
\newtheorem{Proposition}{Proposition}
\newtheorem{PropositionSM}{Proposition}[subsection]
\newtheorem*{proof}{Proof}

\newtheorem{Example}{Example}
\newtheorem{Remark}{Remark}

\usepackage{bbm, mathtools}
\usepackage{graphicx, caption, subcaption}
\usepackage[dvipsnames]{xcolor}
\usepackage{array}
\usepackage{pdflscape}
\usepackage{rotating}
\usepackage{booktabs}

\usepackage{algorithm}
\usepackage{algpseudocode}
\usepackage{xcolor}
\usepackage{tikz}
\usetikzlibrary{decorations.pathreplacing,arrows.meta,calc, patterns}
\newcolumntype{C}[1]{>{\raggedright\arraybackslash\small}m{#1}}
\newcolumntype{T}[1]{>{\raggedright\arraybackslash\small}m{#1}}

\newcommand{\argmin}{\operatornamewithlimits{argmin}}

\providecommand{\E}{\mathbb{E}}
\providecommand{\Var}{\mathrm{Var}}
\providecommand{\Cov}{\mathrm{Cov}}
\providecommand{\sign}{\mathrm{sign}}
\providecommand{\uni}{\textcolor{blue}{Uni}}
\providecommand{\multi}{\textcolor{magenta}{Multi}}

\usepackage[normalem]{ulem}  

\title{A Note on the Folding Test of Unimodality: limitation and improved alternative}

\author{ \href{https://orcid.org/0009-0003-0790-3720}{\includegraphics[scale=0.06]{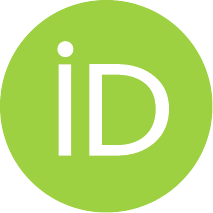}\hspace{1mm}Colombe Becquart} \thanks{Corresponding author: colombe.becquart@tse-fr.eu} \\
Toulouse School of Economics,\\ Université Toulouse Capitole  \\ France \\
Université de Toulouse \\
 France \\
	\And
	\href{https://orcid.org/0000-0002-6511-9091}{\includegraphics[scale=0.06]{orcid.pdf}\hspace{1mm}Aurore Archimbaud} 
        \\
	TBS Business School \\ France \\
	\And
    \href{https://orcid.org/0000-0001-8970-8061}{\includegraphics[scale=0.06]{orcid.pdf}\hspace{1mm}Anne M. Ruiz} \\
	Toulouse School of Economics,\\ Université Toulouse Capitole \\
 France \\
    \And
    \href{https://orcid.org/0000-0002-9974-299X}{\includegraphics[scale=0.06]{orcid.pdf}\hspace{1mm}Zaineb Smida} \\
	Univ Lyon, INSA Lyon, UJM, UCBL, ECL, \\CNRS UMR 5208, ICJ\\ France \\
}

\renewcommand{\shorttitle}{A Note on the FTU}

\hypersetup{
pdftitle={A Note on the Folding Test of Unimodality: limitation and improved alternative},
pdfsubject={62G10, 62H30},
pdfauthor={Colombe Becquart, Aurore Archimbaud, Anne M.~Ruiz, Zaineb Smida},
pdfkeywords={Dip test, Mixture distribution, Multimodality, Unimodality test},
}

\begin{document}
\maketitle

\begin{abstract}
	This note addresses a key limitation of the Folding Test of Unimodality (FTU). In specific univariate mixture settings, the folding-based criterion can systematically fail, misclassifying clearly multimodal distributions as unimodal.
    We fully characterize these failures for Dirac mixtures and extend the analysis to Gaussian mixtures. We then introduce a double-folding procedure that captures complementary information, leading to a new test, the Double Folding Test of Unimodality. It resolves the FTU failures and improves multimodality detection power in simulations.
\end{abstract}

\keywords{Dip test \and Mixture distribution \and Multimodality \and Unimodality test}

\section{Introduction}

Determining whether a distribution is unimodal or multimodal is a fundamental problem in statistical analysis, particularly in the context of clustering, where the presence of multiple modes is often interpreted as evidence of underlying cluster structure. This connection has been widely exploited, both within clustering algorithms \citep{kalogeratos2012dip, shahbaba2014efficient} and as a preprocessing step \citep{adolfsson2019cluster, krause2005multimodal}.
In particular, the interplay between dimension reduction and unimodality provides valuable tools for assessing clusterability \citep{adolfsson2019cluster} and arises naturally in projection pursuit \citep{krause2005multimodal} and tandem clustering \citep{arabie_cluster_1994, alfons_tandem_2024}.

According to \citet{khintchine_unimodal_1938}, a real random variable is said to be unimodal, with mode $a$, if its cumulative distribution function is convex on $(- \infty, a)$ and concave on $(a, +\infty)$.
A standard reference for testing unimodality is Hartigan's dip test \citep{hartigan1985dip}, which is based on the cumulative distribution function. Another commonly used test is Silverman's test \citep{silverman_using_1981}, relying on kernel density estimation.

The Folding Test of Unimodality (FTU) proposed by \citet{siffer_are_2018} offers an appealing alternative: it is computationally efficient, easy to implement, and particularly well suited to streaming data.
Its rationale is that folding a multimodal distribution greatly reduces its variance, whereas the reduction is much smaller for a unimodal distribution.
In the univariate case, for a real random variable~$X$ with finite second moment, folding consists of applying the transformation $X \mapsto |X - s|$, where~$s$ is the folding pivot. The associated variance reduction is measured by the so-called folding ratio:
 \begin{equation}\label{eq:phi}
\phi^*(X) = \displaystyle \min_{s\in \mathbb{R}} \frac{\Var|X-s|}{\Var X}.
 \end{equation}

The smaller $\phi^*(X)$, the more effective the folding is at reducing the variance. The standardized folding ratio (SFR), called the folding statistic in \cite{siffer_are_2018}, compares this ratio to that of a reference unimodal distribution. Following \citet{hartigan1985dip}, \citet{siffer_are_2018} use a random variable $U$ following a uniform distribution $\mathcal{U}([l,u])$ on any bounded interval $[l, u]$ as the least concentrated unimodal case.
Then, the SFR is defined by
$$
\Phi^*(X) = \frac{\phi^*(X)}{\phi^*(U)}.
$$

By Definition 7.2 in \citet{siffer_new_2019}, if $\Phi^*(X) \geq 1$, the distribution is considered unimodal; otherwise, it is deemed multimodal.
The underlying heuristic of folding a distribution and measuring the resulting variance reduction is not equivalent to Khintchine’s definition. This motivated \citet{siffer_new_2019} to study the behavior of the SFR under well-known unimodal distributions. His results provide evidence that variance reduction reflects unimodality for Beta and Gamma distributions.
However, the limitation of this heuristic is illustrated by a counterexample repeatedly discussed in the literature \citep{siffer_new_2019, chasani_uu-test_2022, kolyvakis_multivariate_2023}: the balanced mixture of three symmetric Dirac masses at $-a$, $0$, and $a$. We refer to this case as the ``pathological'' example, together with the corresponding Gaussian mixture with small noise. According to Khintchine’s definition, the latter is clearly multimodal, whereas the folding-based criterion concludes unimodality. This example reveals a fundamental blind spot of folding ratios and highlights their limitation in capturing multimodality.

This paper addresses this limitation through a detailed analysis of the SFR. 
We make two main contributions. 
First, we characterize the limitation of the SFR beyond the pathological example, by identifying structural conditions under which mixture models are misclassified.
Second, we introduce a double-folding procedure that yields a revised unimodality criterion based on two complementary folding ratios, computed on the original and folded data, and an associated test, the Double Folding Test of Unimodality (DFTU), which resolves these failures.

Section~\ref{sec:folding} provides a deeper understanding of the folding mechanism through the study of Dirac and Gaussian mixtures. Section~\ref{sec:doublefolding} introduces the double-folding procedure and the DFTU, while Section~\ref{sec:simu} reports the simulation results. Section~\ref{sec:conclu} concludes. All proofs, technical details, tables from the simulation studies, and additional illustrations are collected in the Appendix.
Replication files are available from: \url{https://github.com/cbecquart/DFTU-Replication}.

\section{Revisiting the standardized folding ratio}
\label{sec:folding}

Let $X$ be a real random variable with finite fourth moment. An exact pivot (not necessarily unique), denoted $s^*$, is derived from expression \eqref{eq:phi} and defined as $s^* = \displaystyle\argmin_{s \in \mathbb{R}} \Var|X-s|$.
This minimization problem is not necessarily convex; a solution always exists but may not be unique \citep[see p.119 in][and Appendix~\ref{app:existence}]{siffer_are_2018}.
To simplify the problem, \citet{siffer_are_2018} introduced the approximate pivot, denoted $s^{**}$, given by $s^{**} = \displaystyle\argmin_{s \in \mathbb{R}} \Var[(X-s)^2].$
In the univariate case, $s^{**}$ exists and is unique \citep[see][Theorem 3.1]{siffer_are_2018}. Its associated folding ratio is $\phi^{**}(X) = \Var|X - s^{**}|/\Var X$. In this setting, both $\phi^*(U)$ and $\phi^{**}(U)$ equal $1/4$ (see  Appendix~\ref{app:phi_1/4}). Thus the two SFR associated to $s^*$ and $s^{**}$ are, respectively, $\Phi^*(X) = 4\,\Var|X - s^*|$ and $\Phi^{**}(X) = 4\,\Var|X - s^{**}|$. 

\subsection{Mixture of Dirac distributions}
\label{subsec:dirac}

In the discrete case, let us consider a $k$-Dirac mixture with proportions $\varepsilon_i>0$ and locations $\mu_i$ for $i\in\{1,\ldots, k\}$, $k>1$: 
\begin{equation}\label{model:mixdirac}
X \sim \sum_{i=1}^{k} \varepsilon_i \,\delta_{\mu_i}    
\end{equation}
where $\sum_{i=1}^k\varepsilon_i=1$.
Using the affine invariance property of $\Phi^*(X)$ and $\Phi^{**}(X)$ (see Appendix~\ref{app:aff_inv}), let us consider without loss of generality (w.l.o.g.) that $X$ is standardized ($\E[X] = 0$ and $ \Var[X]=1$) with ordered locations $\mu_1 <  \cdots  < \mu_k$ in model \eqref{model:mixdirac}. 

For Dirac mixtures, one expects $\Phi^{*}(X) < 1$ and $\Phi^{**}(X) < 1$, indicating multimodality. However, there is at least the aforementioned pathological example where this is not the case. This balanced Dirac mixture ($\mu_1=-a$, $\mu_2=0$, $\mu_3=a$) results in $s^{**}=0$ and $\Phi^{**}(X) > 1$, while $s^{*}=\pm a/4$ yields $\Phi^{*}(X) = 1$. In this case, the SFR are considered to “fail” since they are not lower than one despite clear multimodality. The key question is whether similar failures occur in broader mixture models and whether they can be characterized.

For a 2-Dirac mixture, $\Phi^{*}(X) = \Phi^{**}(X)= 0 <1$ for any choice of proportions and locations, meaning the SFR always correctly detect multimodality.
In the general case, Propositions \ref{th:exact} and \ref{th:approx} provide explicit expressions for $\Phi^*(X)$ and $\Phi^{**}(X)$ respectively. They also give the conditions under which the ratios exceed~1, thus falsely indicating unimodality.

Suppose that model \eqref{model:mixdirac} holds. For any $g\in \{1,\ldots,k-1\}$, let 
$\alpha_g = \sum_{i=1}^{g} \varepsilon_i \mu_i$ and 
$\eta_g = \sum_{i=1}^{g} \varepsilon_i$. Proofs of the following propositions are available in Appendices \ref{app:thexact} and \ref{app:thapprox}.

\begin{Proposition} \label{th:exact}
For the exact pivot approach, we have:
\begin{enumerate}
    \item $s^* =  \frac{\alpha_{g^*}}{2}\left(\frac{1}{\eta_{g^*}} - \frac{1}{1-\eta_{g^*}}\right),$ where $g^* = \displaystyle\argmin_{g\in \{1,\ldots,k-1\}} \min_{s\in [\mu_g, \mu_{g+1}]} \Var|X-s|$. 
    \item The associated SFR is $\Phi^*(X) = 4 \Big(1 - \frac{\alpha_{g^*}^2}{\eta_{g^*}(1-\eta_{g^*})} \Big)$.
    \item The equation \( \Phi^*(X) = 1 \) defines an ellipse in the \((\alpha_{g^*}, \eta_{g^*})\)-plane with center \((0, \tfrac{1}{2})\), semi-minor axis \(\tfrac{\sqrt{3}}{4}\), and semi-major axis \(\tfrac{1}{2}\).
    \item The SFR fails, i.e. \(\Phi^*(X) \geq 1\), if and only if
$\alpha_{g^*} \in \left[-\frac{\sqrt{3}}{2}\sqrt{\eta_{g^*}\bigl(1-\eta_{g^*}\bigr)},\, 0\right]$.

\end{enumerate}
\end{Proposition}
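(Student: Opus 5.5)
Since $X$ is standardized, $\Phi^*(X)=4\min_s \Var|X-s|$, and the plan is to minimize $\Var|X-s|$ separately on each gap $[\mu_g,\mu_{g+1}]$ and then take the best gap. Pivots outside $[\mu_1,\mu_k]$ can be discarded: for $s\le\mu_1$ we have $|X-s|=X-s$, so $\Var|X-s|=\Var X=1$. That value is never below the value at the endpoint $\mu_1$, which already belongs to the first gap.

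Fix $g$ and $s\in[\mu_g,\mu_{g+1}]$. Then $|X-s| = (X-s)\,\sigma(X)$, where $\sigma=-1$ on the first $g$ atoms and $+1$ on the others. Since $\sigma^2=1$,
\[
\E|X-s|^2=\E(X-s)^2=1+s^2 .
\]
For the first moment, using $\E X=0$ we get $\sum_{i>g}\varepsilon_i\mu_i=-\alpha_g$, hence
\[
\E|X-s| = -(\alpha_g - s\eta_g) + \bigl(-\alpha_g - s(1-\eta_g)\bigr) = -2\alpha_g + s(2\eta_g-1).
\]
Therefore
\[
V_g(s):=\Var|X-s| = 1+s^2-\bigl(2\alpha_g - s(2\eta_g-1)\bigr)^2 .
\]
This is a quadratic in $s$ with leading coefficient $1-(2\eta_g-1)^2 = 4\eta_g(1-\eta_g)>0$, so it is convex. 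Setting $V_g'(s)=0$ gives $s(1-(2\eta_g-1)^2) = 2\alpha_g(2\eta_g-1)$, that is,
\[
s=\frac{\alpha_g(2\eta_g-1)}{2\eta_g(1-\eta_g)}=\frac{\alpha_g}{2}\Bigl(\frac1{\eta_g}-\frac1{1-\eta_g}\Bigr).
\]
Substituting back, one gets $2\alpha_g - s(2\eta_g-1) = \alpha_g/(2\eta_g(1-\eta_g))\cdot\ldots$; after routine algebra this should give
\[
V_g=1-\frac{\alpha_g^2}{\eta_g(1-\eta_g)} .
\]

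The main obstacle is that this stationary point need not lie in $[\mu_g,\mu_{g+1}]$. If it lies outside, the constrained minimum on that gap is attained at an endpoint, which is an atom $\mu_{g}$ or $\mu_{g+1}$. The argument I would use runs over all gaps at once. The function $s\mapsto \Var|X-s|$ on $\mathbb R$ is continuous and equals $V_g$ on gap $g$. At a global minimizer $s^*$ lying in some gap $g$, the one-sided derivatives must satisfy $V_g'(s^*)\ge0$ on the right and $V_{g-1}'(s^*)\le0$ on the left when $s^*=\mu_g$ is an atom. I would compute $V_g'(\mu_g)-V_{g-1}'(\mu_g)$ from the formula above; it is expected to be $\le0$, i.e. the function has a concave kink at each atom, because $\E|X-s|$ has a convex kink there and enters with a minus sign through its square. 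A concave kink rules out a minimum at an atom, unless the function is flat there, in which case the unconstrained stationary point of the neighboring gap gives the same value. So $s^*$ is an interior stationary point of some $V_{g^*}$, and $g^*$ is exactly the minimizing gap in the statement. This yields items 1 and 2.

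For items 3 and 4, write $\eta=\eta_{g^*}$ and $\alpha=\alpha_{g^*}$. The equation $\Phi^*=1$ reads $\alpha^2=\tfrac34\eta(1-\eta)$, i.e.
\[
\alpha^2+\tfrac34(\eta-\tfrac12)^2=\tfrac3{16}, \quad\text{equivalently}\quad \frac{\alpha^2}{3/16}+\frac{(\eta-1/2)^2}{1/4}=1 .
\]
This is an ellipse centered at $(0,\tfrac12)$ with semi-axes $\sqrt3/4$ and $1/2$. The condition $\Phi^*\ge1$ is the inequality $|\alpha|\le\tfrac{\sqrt3}{2}\sqrt{\eta(1-\eta)}$.

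It remains to explain why only the nonpositive half appears in item 4. Since the $\mu_i$ are increasing and $\E X=0$, the partial mean $\alpha_g/\eta_g$ of the lowest $g$ atoms is below the overall mean $0$, so $\alpha_g<0$ for every $g$. Hence the interval reduces to $[-\tfrac{\sqrt3}{2}\sqrt{\eta(1-\eta)},0]$. Finally, I would check the result against the pathological example: there $g^*=1$, $\eta=\tfrac13$ and $\alpha=-a/3$ with $a^2=3/2$, which gives $s^*=-a/4$ and $\Phi^*=1$, in agreement with the paper.
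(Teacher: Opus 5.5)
Your proposal is correct. Items 2--4 match the paper: the same quadratic $V_g$ on each gap, the same vertex value, and the same ellipse algebra. Where you differ is in how you rule out a minimizer at an atom, and your argument there is genuinely different from the paper's.

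The paper first proves that the unconstrained vertices are strictly ordered, $s_1<\cdots<s_{k-1}$. It does this by computing $s_{g+1}-s_g$ explicitly, using the ordering of the $\mu_i$ and $\E X=0$. It then combines this with the clamped minimizer on each gap: a global minimizer at an interior atom $\mu_{g+1}$ would force $s_g\ge\mu_{g+1}\ge s_{g+1}$, a contradiction.

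You instead use the structure $\Var|X-s|=1+s^2-m(s)^2$, where $m(s)=\E|X-s|$ is convex, piecewise linear and positive. At $\mu_g$ the derivative jumps by exactly $-4\varepsilon_g\,\E|X-\mu_g|<0$. So the kink is strictly concave, and your ``flat'' caveat never occurs.

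Your route is more elementary and local, and it shows more: no atom is even a local minimizer. It also covers the extreme atoms. At $\mu_1$ the left derivative is $0$ and the right one is $4\varepsilon_1\mu_1<0$, so the minimum is below $1$ and no exact pivot lies outside $(\mu_1,\mu_k)$. The paper's route, in exchange, gives a global monotonicity of the candidate pivots. You also justify $\alpha_g<0$, which the paper uses without proof.

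There is one slip to fix. The stationarity equation should read $s\bigl(1-(2\eta_g-1)^2\bigr)=-2\alpha_g(2\eta_g-1)$, that is, $s=\alpha_g(1-2\eta_g)/(2\eta_g(1-\eta_g))$. Your middle expression has the opposite sign. Your final formula and your check on the pathological example are nonetheless right. At that point one has exactly $2\alpha_g-s(2\eta_g-1)=\alpha_g/(2\eta_g(1-\eta_g))$, which gives the vertex value immediately.
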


\begin{Proposition} \label{th:approx}
For the approximate pivot approach, we have: 

\begin{enumerate}
    \item $s^{**} = \frac{\gamma}{2}, \text{~with~} \gamma = \E[X^3]$. 
    \item The associated SFR is $\Phi^{**}(X) = 4 \Big(1 - 4 \alpha_{g^{**}}^2-2\alpha_{g^{**}}(1-2\eta_{g^{**}})\gamma +\eta_{g^{**}}(1-\eta_{g^{**}})\gamma^2 \Big)$, where $g^{**} = \sum^k_{i=1} \mathbbm{1}_{\{\mu_i \leq s^{**}\}}$.
    \item The equation $\Phi^{**}(X) = 1$ corresponds to a line in the $(\alpha_{g^{**}}, \eta_{g^{**}})$-plane, given by $4\alpha_{g^{**}} - (2 \eta_{g^{**}}-1)\,\gamma +\sqrt{\gamma^2+3} = 0$.
    \item The SFR fails, i.e. $\Phi^{**}(X) \geq 1$, if and only if $4\alpha_{g^{**}} - (2\eta_{g^{**}}-1)\,\gamma +\sqrt{\gamma^2+3} \geq 0$.
\end{enumerate}
\end{Proposition}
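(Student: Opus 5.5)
The plan is a direct computation in the standardized setting $\E[X]=0$, $\Var X=1$, which we may assume by the affine invariance of $\Phi^{**}$ (Appendix~\ref{app:aff_inv}). Throughout I write $\alpha=\alpha_{g^{**}}$, $\eta=\eta_{g^{**}}$, $s=s^{**}$.

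\textbf{Item 1.} I would expand $\Var[(X-s)^2]=\E(X-s)^4-\bigl(\E(X-s)^2\bigr)^2$ using $\E X=0$, $\E X^2=1$ and $\E X^3=\gamma$. This gives $\E(X-s)^4=\E X^4-4s\gamma+6s^2+s^4$ and $\bigl(\E(X-s)^2\bigr)^2=(1+s^2)^2$. Hence $\Var[(X-s)^2]=\E X^4-1-4s\gamma+4s^2$. This is a strictly convex quadratic in $s$, minimized uniquely at $s^{**}=\gamma/2$.

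\textbf{Item 2.} Write $\Var|X-s|=\E(X-s)^2-(\E|X-s|)^2=1+s^2-(\E|X-s|)^2$. Split the atoms according to whether $\mu_i\le s$; there are $g^{**}$ such atoms, and an atom with $\mu_i=s$ contributes zero on either side. Then
\[
\E|X-s|=\sum_{i>g^{**}}\varepsilon_i(\mu_i-s)-\sum_{i\le g^{**}}\varepsilon_i(\mu_i-s)=-2\alpha-(1-2\eta)s .
\]
Here I use $\sum_i\varepsilon_i\mu_i=0$. Substituting $s=\gamma/2$, expanding the square, and using $\bigl(1-(1-2\eta)^2\bigr)/4=\eta(1-\eta)$ gives
\[
\Var|X-s^{**}|=1-4\alpha^2-2\alpha(1-2\eta)\gamma+\eta(1-\eta)\gamma^2 .
\]
Multiplying by $4$ (recall $\phi^{**}(U)=1/4$) yields the stated formula. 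I would note that the formula stays valid in the degenerate cases $g^{**}\in\{0,k\}$ with the conventions $\alpha_0=\eta_0=0$ and $\alpha_k=0,\ \eta_k=1$, so no separate localisation of $\gamma/2$ inside $[\mu_1,\mu_k]$ is needed.

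\textbf{Items 3--4.} This is the only non-mechanical step. The apparently quadratic condition $\Phi^{**}\ge1$ must be shown to be linear in $(\alpha,\eta)$. The key observation is the completed square
\[
4\alpha^2+2\alpha(1-2\eta)\gamma-\eta(1-\eta)\gamma^2=\Bigl(2\alpha+\tfrac{(1-2\eta)\gamma}{2}\Bigr)^2-\tfrac{\gamma^2}{4}=(\E|X-s^{**}|)^2-\tfrac{\gamma^2}{4}.
\]
So $\Phi^{**}(X)\ge1$ is equivalent to $\bigl(4\alpha+(1-2\eta)\gamma\bigr)^2\le\gamma^2+3$, that is, $|4\alpha+(1-2\eta)\gamma|\le\sqrt{\gamma^2+3}$.

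The sign is forced: $4\alpha+(1-2\eta)\gamma=-2\,\E|X-s^{**}|\le0$. The absolute value therefore resolves to $-(4\alpha+(1-2\eta)\gamma)\le\sqrt{\gamma^2+3}$, i.e.
\[
4\alpha-(2\eta-1)\gamma+\sqrt{\gamma^2+3}\ge0 .
\]
This is item~4, and equality gives the line of item~3. The main thing to watch is this sign argument. The other branch of the quadratic, $4\alpha-(2\eta-1)\gamma-\sqrt{\gamma^2+3}=0$, is never attained because it would require $\E|X-s^{**}|<0$. This justifies calling the boundary a single line rather than a pair of lines.
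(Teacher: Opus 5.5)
Your proposal is correct and follows the paper's overall route: substitute $s^{**}=\gamma/2$ into $\Var|X-s|=1+s^2-\E^2|X-s|$, use the piecewise-linear form $\E|X-s|=-2\alpha_{g^{**}}-(1-2\eta_{g^{**}})s$, then reduce $\Phi^{**}(X)\ge 1$ to a difference of squares. There are two local differences worth noting.

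For item~1, the paper quotes Siffer's formula $s^{**}=\Cov(X,X^2)/(2\Var X)$. You instead expand $\Var[(X-s)^2]=\E X^4-1-4s\gamma+4s^2$ directly, which makes the argument self-contained.

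For items~3--4, the paper factors the equation as $\bigl(4\alpha_{g^{**}}-(2\eta_{g^{**}}-1)\gamma-\sqrt{\gamma^2+3}\bigr)\bigl(4\alpha_{g^{**}}-(2\eta_{g^{**}}-1)\gamma+\sqrt{\gamma^2+3}\bigr)=0$. It then discards the first factor using $\alpha_{g^{**}}<0$ together with $|(2\eta_{g^{**}}-1)\gamma|\le|\gamma|<\sqrt{\gamma^2+3}$. You instead observe that $4\alpha_{g^{**}}+(1-2\eta_{g^{**}})\gamma=-2\,\E|X-s^{**}|\le 0$, so the sign of the linear form is automatic.

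Your version explains why only one branch survives: the linear form is just a multiple of a mean absolute deviation. It also avoids the fact $\alpha_{g^{**}}<0$, which the paper asserts without proof and which presupposes $g^{**}\in\{1,\ldots,k-1\}$. That presupposition does hold, since $s^{**}-c=\Cov\bigl(X,(X-c)^2\bigr)/(2\Var X)$ is positive for $c=\mu_1$ and negative for $c=\mu_k$. The paper's version keeps the whole argument in the parameters $(\alpha,\eta,\gamma)$, consistent with how Proposition~\ref{th:exact} is handled.
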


\begin{Remark} \label{rem:perf}
Since $\Phi^{*}(X) \leq \Phi^{**}(X)$, if the exact pivot approach fails ($\Phi^*(X) \geq 1$), then the approximate pivot also fails ($\Phi^{**}(X) \geq 1$).
\end{Remark}

Propositions \ref{th:exact} and \ref{th:approx} show that the SFR fail in certain cases. These cases are not necessarily characterized by balance, symmetry, or a three-group structure, as in the pathological example. Example \ref{ex:pivot} illustrates two failure cases that differ from the pathological example. It also highlights a situation where only the approximate pivot fails (case (a)). 

\begin{Example} \label{ex:pivot}
\begin{enumerate}
    \item[(a)] Let $X \sim 0.2 \cdot \delta_{-2} + 0.4 \cdot \delta_{0} + 0.4 \cdot \delta_{2}$. In this unbalanced case, only the approximate pivot fails: $\Phi^{**}(X) = 1.41>1$ while $\Phi^{*}(X) = 0.95 <1$.
    \item[(b)] Let $ X \sim 0.2 \cdot \delta_{-3} + 0.2 \cdot \delta_{-1.5} + 0.2 \cdot \delta_{2.5} + 0.2 \cdot \delta_{4} + 0.2 \cdot \delta_{11}$. In this asymmetric case, both pivots fail: $\Phi^{**}(X) = 1.38 > 1$ and $\Phi^{*}(X) = 1.07 > 1$.
\end{enumerate}

\end{Example}

\subsection{Mixture of Gaussian distributions}
    
In the continuous setting, let us consider a $k$-Gaussian mixture displaying a within-variance term, with proportions $\varepsilon_i>0$, means $\mu_i$, and variances $\sigma_i^2$, for $i \in \{1, \ldots, k\}$, $k>1$:
\begin{equation}\label{eq:mixtgauss}
    X \sim \sum_{i=1}^{k} \varepsilon_i \, \mathcal{N}(\mu_i, \sigma_i^2),
\end{equation}
where $\sum_{i=1}^{k} \varepsilon_i = 1$, and $\E[X] = 0$. 
In this setting, we cannot obtain an explicit expression for the exact and approximate pivots.
The SFR, can however, be derived from the expression:

\begin{equation}
\label{eq:gauss}
    \Phi(X, s) =  4 \Big[ \Var[X] + s^2 - \Big( \sum_{i=1}^{k} \varepsilon_i\, h(\mu_i, \sigma_i, s)\Big)^2\Big],
\end{equation}
where 
$h(\mu, \sigma, s) = (s-\mu)\left( 2 F\left( \frac{s-\mu}{\sigma} \right) - 1\right) + 2 \sigma f\left( \frac{s-\mu}{\sigma}\right),$
and where $f$ and $F$ denote, respectively, the probability density function and the cumulative distribution function of the standard normal distribution $\mathcal{N}(0,1)$.
Equation~\eqref{eq:gauss} is easily derived from the variance of the folded normal distribution \citep{leone1961folded}. 

We have $\Phi^*(X) = \Phi(X, s^*)$ and $\Phi^{**}(X) = \Phi(X, s^{**})$.
When all variances $\sigma_i$ are set to 0 for $i = \{1, \ldots, k\}$, the expressions $\Phi^{*}(X)$ and $\Phi^{**}(X)$ obtained from equation \eqref{eq:gauss} reduce to the Dirac case (Propositions \ref{th:exact} and \ref{th:approx}).

Note that one can standardize $X$ such that $\Var[X] = 1 + W$, where $ W = \sum_{i=1}^k \varepsilon_i \sigma_i^2$ represents the within-variance. Accordingly, the signal-to-noise ratio (SNR) is defined as $\mathrm{SNR} = 1/W$. Setting the between-variance to~1 ensures consistency with Subsection \ref{subsec:dirac}, which corresponds to the special case where $W = 0$.
Since theoretical computations beyond this point are challenging, we consider further the 2-Gaussian mixture case and highlight the impact of the within-variance term. The results differ from the Dirac case, where the SFR always detects multimodality for $k=2$, as shown in Example \ref{ex:gauss}.

\begin{Example}\label{ex:gauss}

Let $X \sim 0.3 \cdot \mathcal{N}(-2.8, \sigma^2) + 0.7 \cdot \mathcal{N}(1.2, \sigma^2)$. Fig.~\ref{fig:phi_s} shows $\Phi^*(X)$ as a function of $\sigma^2$, while Fig.~\ref{fig:pdf} and \ref{fig:cdf} illustrate the distribution for some values of $\sigma^2$. 
When $\sigma^2$ is small (high SNR), the distribution resembles a 2-Dirac mixture and $\Phi^*(X)$ is close to zero. As $\sigma^2$ increases, the two modes become less distinct (Fig.~\ref{fig:pdf}), the distribution looks more unimodal (Fig.~\ref{fig:cdf}) and $\Phi^*(X)$ rises (Fig.~\ref{fig:phi_s}).
In addition, when $\sigma^2$ exceeds $1.34$ (Fig.~\ref{fig:phi_s}), $\Phi^*(X)$ becomes greater than $1$, indicating the transition from multimodality to unimodality according to Definition 7.2 of unimodality in \cite{siffer_new_2019}. Note, however, that the distribution with $\sigma^2 = 1.34$ would still be considered multimodal under the definition of \cite{khintchine_unimodal_1938}, as seen from the multiple changes of convexity in Fig.~\ref{fig:cdf}.

\end{Example}

\begin{figure}[ht]
    \centering
    \begin{subfigure}[b]{0.9\textwidth}
        \centering
        \includegraphics[width=\textwidth]{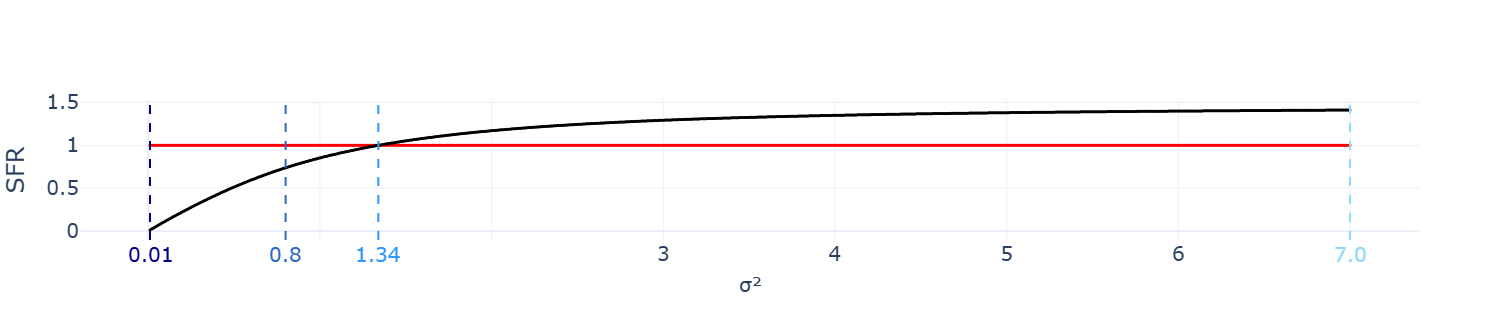}
        \caption{SFR $\Phi^*(X)$ (top)}
        \label{fig:phi_s}
        \vspace{5mm}
    \end{subfigure}
    \begin{subfigure}[b]{\textwidth}
        \centering
        \includegraphics[width=\textwidth]{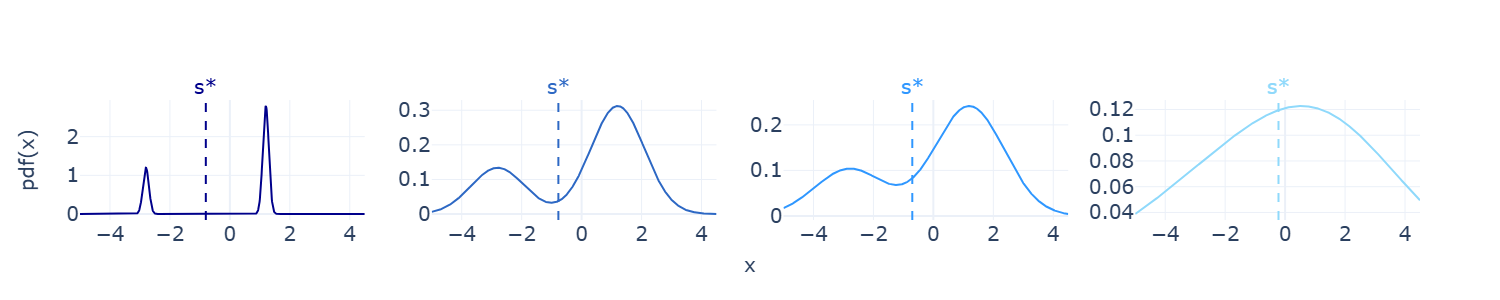}
        \caption{Probability density  functions}
        \label{fig:pdf}
        \vspace{5mm}
    \end{subfigure}
    \begin{subfigure}[b]{\textwidth}
        \centering
        \includegraphics[width=\textwidth]{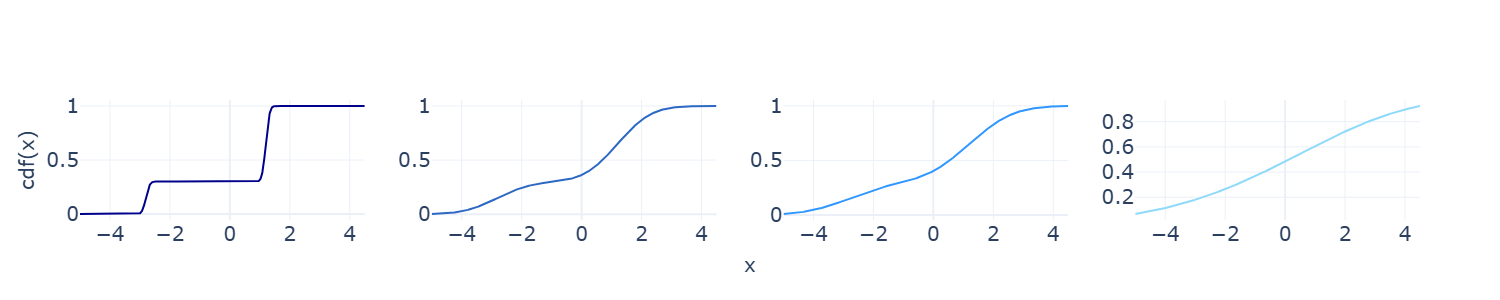}
        \caption{Cumulative distribution functions}
        \label{fig:cdf}
    \end{subfigure}

    \begin{subfigure}[b]{0.7\textwidth}
        \centering
        \includegraphics[width=\textwidth]{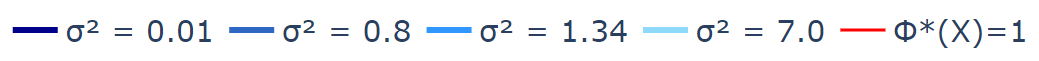}
    \end{subfigure}
    \caption{2-Gaussian mixture with $\varepsilon_1 = 0.3$, $\mu_1 = -2.8$, $\varepsilon_2 = 0.7$, $\mu_2 = 1.2$, and varying $\sigma^2$: (a) values of $\Phi^*(X)$ (black) as a function of $\sigma^2$, (b) probability density functions and associated exact pivots $s^*$, (c) cumulative distribution functions.}
    \label{fig:gaussian}
\end{figure}

The remarks of Example \ref{ex:gauss} are not specific to this particular example. For Gaussian mixtures \eqref{eq:mixtgauss} with low SNR that nearly satisfy Khintchine’s definition of unimodality, the SFR may conclude unimodality ($\Phi^*(X) \geq 1$). In that situation, the ratio cannot be considered to have failed.
In other cases, however, $\Phi^*(X) \geq 1$ reflects a genuine failure caused by the mixture structure, as in the Dirac case (see Appendix~\ref{app:gaussian} for details).

In the following section, we propose a revised unimodality criterion and an associated test, designed to correctly identify multimodality in the aforementioned failure cases.

\section{Enhancing the standardized folding ratio with a double-folding procedure}
\label{sec:doublefolding}

\subsection{The double-folding algorithm}

The limitation of the SFR in detecting multimodality is addressed through the development of an iterative algorithm. 
The first step consists of computing the exact SFR $\Phi_1^*(X)$ with associated pivot $s_1^*$. If $\Phi_1^*(X) < 1$, the procedure stops and the algorithm returns a multimodal conclusion. Otherwise, a second step is performed. The data are folded using the approximate pivot $s_1^{**}$, and the exact SFR $\Phi_2^*(X)$, with associated pivot $s_2^*$, is then computed on the folded data. The procedure concludes unimodality if $\Phi_2^*(X) \geq 1$, and multimodality otherwise.

The use of exact SFR in both steps is justified by Remark~\ref{rem:perf}.
The rationale for using the approximate pivot to transform the data is its ability to break the symmetry of the data when necessary. Indeed, folding the pathological example's distribution using an exact pivot yields another pathological mixture, with the gap between groups reduced by half (see Appendix~\ref{app:patho}).
The second step will thus also falsely conclude unimodality. In contrast, using the approximate pivot breaks the initial symmetry, enabling the detection of multimodality in the second step. An alternative would be to use an exact pivot shifted by a small delta value.

The second step is not always necessary. According to our empirical experience, the SFR do not falsely conclude multimodality; the only issue arises when they incorrectly conclude unimodality. If the second step were applied even when the first step ratio detects multimodality, it could override the correct detection. For instance, folding a 2-Dirac mixture may lead to a unimodal distribution. 

If the data are unimodal, both steps should correctly conclude unimodality ($\Phi_1^*(X) \geq 1$ and $\Phi_2^*(X) \geq 1$). 
If they are multimodal, either the first step detects it or, if not, we conjecture that the second step reveals it ($\Phi_1^*(X) < 1$ or $\Phi_2^*(X) < 1$).
In particular, the double-folding procedure yields the correct conclusion for the ``pathological'' example. When folding with the approximate pivot ($s_1^{**}=0$), the resulting distribution becomes a mixture of two Dirac masses at $0$ and $a$, with respective weights $1/3$ and $2/3$. This leads to $s_2^{*} = a/2$ and $\Phi_2^{*} = 0$, indicating clear multimodality.
The following subsection analyzes the general case of a 3-Dirac mixture to further clarify the motivation.

\subsection{Motivation through the 3-Dirac mixture case}

Let us consider the standardized model \eqref{model:mixdirac} with $k=3$ and ordered locations. Throughout this section, when the minimizer of $\Var|X-s|$ is not unique, we take $s^*$ to be the smallest minimizer.
We provide a new characterization of Proposition~\ref{th:exact}, in which the exact SFR failure conditions depend only on $\mu_1$, $\varepsilon_1$, and $\varepsilon_3$. It allows us to analyze the second step of the double-folding procedure when the first-step ratio fails. Then, based on numerical evidence, we argue that in this setting the double-folding correctly detects multimodality. This characterization is given in Proposition~\ref{prop:dftu3} (see Appendix~\ref{app:thdftu3}). 
Note that, by the affine invariance of the SFR, we may assume w.l.o.g. that $s^* \in [\mu_1,\mu_2]$.

\begin{Proposition} \label{prop:dftu3}
    \begin{enumerate}
        \item The existence of $s^* \in [\mu_1, \mu_2]$ is equivalent to $\mu_1 \leq U(\varepsilon_1, \varepsilon_3)$ with $ U(\varepsilon_1, \varepsilon_3) \coloneqq \max\{U_1, U_2\}$ where 
        
        $U_1 = - \frac{\sqrt{2}}{2} \sqrt{\frac{1 - \varepsilon_1}{\varepsilon_1} + \sqrt{\frac{1 - \varepsilon_1}{\varepsilon_1}} \sqrt{\frac{\varepsilon_3}{1 - \varepsilon_3}}}$ and 
        $U_2 = - \frac{2\varepsilon_2 + 4\varepsilon_1 \varepsilon_3}{\sqrt{4\varepsilon_1 \varepsilon_3(\varepsilon_2 + 4\varepsilon_1 \varepsilon_3)}} \sqrt{\frac{\varepsilon_3}{1 - \varepsilon_3}}$.
        \item When $s^* \in [\mu_1, \mu_2]$, $\Phi^*(X) \geq 1$ if and only if $\mu_1 \geq F(\varepsilon_1)$, where the failure bound is given by $
    F(\varepsilon_1) = - \frac{\sqrt{3}}{2} \sqrt{\frac{1 - \varepsilon_1} {\varepsilon_1}}$.
    \end{enumerate}
\end{Proposition}

According to the double-folding algorithm, whenever the first-step ratio detects multimodality, the procedure stops. 
When it is not the case (Proposition~\ref{prop:dftu3}), the distribution is folded using the approximate pivot, the locations are reordered and standardized, and the bounds in Proposition~\ref{prop:dftu3} are updated accordingly. We analyze these updated quantities via numerical optimization.
First, the results show that $s_2^*$ lies between the first and second folded locations, since the inequality in Proposition~\ref{prop:dftu3}\,(1.) is satisfied for any admissible parameter values. Consequently, the inequality in Proposition~\ref{prop:dftu3}\,(2.) can be used to assess whether the second-step ratio fails. Numerical optimization indicates that this inequality does not hold, supporting the view that the double-folding procedure does not fail in this model (see Appendix~\ref{app:3dirac} for details).

Motivated by this result, we introduce a new unimodality test based on the double-folding procedure. The test is described in the following subsection.

\subsection{The Double Folding Test of Unimodality}

In \citet{siffer_are_2018}, the SFR is used to test the null hypothesis $H_0:\,\Phi^*(X)\geq 1$ at significance level $\alpha$, in accordance with their definition of unimodality. The introduction of a double-folding criterion provides a novel approach to testing unimodality, leading to the Double Folding Test of Unimodality (DFTU), which tests:
$$ \mathcal{H}_0: \Phi^*_1(X) \geq 1 \text{ and } \Phi^*_2(X) \geq 1 \quad vs. \quad \mathcal{H}_1: \Phi^*_1(X) < 1 \text{ or }\Phi^*_2(X) < 1,
$$
where 
$\Phi^*_1(X) = 4\,\Var|X - s^*_1| / \Var X$ with $s^*_1$ an exact pivot for the original data $X$, and 
$\Phi^*_2(X) = 4\,\Var|\tilde{X} - s^*_2| / \Var \tilde{X}$ with $\tilde{X} = |X - s^{**}_1|$, $s^{**}_1$ being the approximate pivot for $X$, and $s^*_2$ an exact pivot for the folded data $\tilde{X}$.

Regarding the statistical significance of the DFTU, the challenge lies in the fact that $\mathcal{H}_0$ involves two folding statistics $\Phi^*_1(X)$ and $\Phi^*_2(X)$. 
The two folding statistics, computed sequentially, are associated with significance levels $\alpha_1$ and $\alpha_2$, and corresponding critical values $q_1$ and $q_2$. Significance levels are $\alpha_1 = \mathbb{P}(\Phi^*_1(X) < q_1 |  \mathcal{H}_0)$ and $\alpha_2 = \mathbb{P}(\Phi^*_2(X) < q_2 | \Phi^*_1(X) \geq q_1; \mathcal{H}_0)$. In this case, $\mathcal{H}_0$ is rejected if either $\Phi^*_1(X) < q_1$, or if $\Phi^*_1(X) \geq q_1$ and $\Phi^*_2(X) < q_2$. Consequently, the overall significance level of the test $\alpha$ is given by $\alpha = \alpha_1 + (1 - \alpha_1) \alpha_2$. Given predefined values of $\alpha_1$ and $\alpha_2$, the corresponding critical values $q_1$ and $q_2$ can be estimated via Monte Carlo simulation.
Following \citet{siffer_new_2019}, we generate observations from a uniform distribution. The critical values $q_1$ and $q_2$ are then taken as the empirical quantiles of order $\alpha_1$ and $\alpha_2$ of $\Phi^*_1(X)$ and $\Phi^*_2(X)$, respectively. For details about the implementation of the DFTU, see Algorithm~\ref{alg:dftu} in Appendix~\ref{app:algo}.

\section{Simulations}\label{sec:simu}

We want to compare the proposed DFTU with the FTU \citep{siffer_are_2018} based on both pivots, and with the dip test \citep{hartigan1985dip}. 
For each distribution, we generate 100 datasets of 1,000 observations and apply all four tests at a 5\% significance level. For the DFTU, $\alpha_1$ is set to 3\%, although similar results are obtained for other choices of $\alpha_1$. 
Table \ref{tab:simu} reports the majority decision for each test (see Table~\ref{tab:simuraw} in Appendix~\ref{app:simu} for detailed results). The first three rows illustrate cases in which all tests reach the correct conclusion.
The standard Gaussian distribution and the 2-Gaussian mixture with low SNR are both classified as unimodal, whereas the well-separated mixture of a Gaussian and a uniform distribution is identified as multimodal by all tests.
The last four rows present multimodal examples in which the FTU does not detect multimodality, yet the DFTU accurately identifies it.
Finally, the mixture $0.6 \cdot \mathcal{N}(0, 0.5) + 0.4 \cdot \mathcal{U}([1, 4])$ is noteworthy, as folding-based tests and the dip test lead to different conclusions. According to the definition of unimodality based on the cumulative distribution function \citep{khintchine_unimodal_1938}, this distribution can be considered unimodal (see Table~\ref{tab:distrib} in Appendix~\ref{app:simu}). However, from a clustering perspective, it may be preferable to conclude multimodality, which matches the DFTU's conclusion, as well as the FTU$^{*}$'s and the FTU$^{**}$'s. 

\begin{table}[ht]
\centering
\begin{tabular}{lcccc}
\hline
Distribution & FTU$^{*}$  & FTU$^{**}$ & DFTU & dip test \\
\hline
$\mathcal{N}(0, 1)$ &  \uni   &  \uni   & \uni     &  \uni  \\
$0.5 \cdot (\mathcal{N}_0 + \mathcal{N}_1)$ & \uni & \uni & \uni & \uni  \\
$0.6 \cdot \mathcal{N}_0 + 0.4 \cdot \mathcal{U}([4, 8])$ & \multi & \multi & \multi & \multi  \\
$0.6 \cdot \mathcal{N}_0 + 0.4 \cdot \mathcal{U}([1, 4])$ & \multi & \multi & \multi & \uni  \\
$1/3 \cdot \delta_{-2} + 1/3 \cdot \delta_{0} + 1/3 \cdot \delta_{2}$ & \uni & \uni & \multi & \multi  \\
$1/3 \cdot \mathcal{N}_{-2} + 1/3 \cdot \mathcal{N}_{0} + 1/3 \cdot \mathcal{N}_{2}$ & \uni & \uni & \multi & \multi  \\
$0.2 \cdot (\delta_{-3} + \delta_{-1.5} + \delta_{2.5} + \delta_{4} + \delta_{11})$ & \uni & \uni & \multi & \multi  \\
$0.2 \cdot (\mathcal{N}_{-3} + \mathcal{N}_{-1.5} + \mathcal{N}_{2.5} + \mathcal{N}_{4} + \mathcal{N}_{11})$ & \uni & \uni & \multi & \multi  \\
\hline
\end{tabular}
\caption{Simulation results. FTU$^*$ is the FTU with exact pivot, FTU$^{**}$ is the FTU with approximate pivot. $\mathcal{N}_a = \mathcal{N}(a, 0.5)$. Uni and Multi stand respectively for unimodal and multimodal.}
\label{tab:simu}
\end{table}

\section{Concluding remarks}
\label{sec:conclu}

In this paper, we investigate a key limitation of the FTU, namely its inability to detect multimodality in certain mixture distributions, which stems from the behavior of the SFR. By analyzing Dirac and Gaussian mixtures, we identify distributional structures under which the SFR fails to capture multimodality. 

To address this issue, we propose the DFTU, a two-step test that applies the FTU to the original data and, when unimodality is not rejected, to the folded data. Computations and simulations show that it successfully detects multimodality in cases where the original FTU fails. These results suggest that combining information across successive transformations, as done through double folding, can substantially improve the detection of multimodality.

Although an iterative procedure with more than two steps is conceivable, we found no examples where this was necessary. Future work could focus on generalizing the DFTU to the multivariate case, as initially intended for the FTU, or on replacing the variance with a more robust estimator such as $Q_n$ \citep{rousseeuw_alternatives_1993}, since the current test may be sensitive to outliers. Another promising direction  is the application of the proposed approach to ICS tandem clustering.  
While informative
directions are traditionally identified through departures from normality, deviations from
unimodality may be more directly connected to group separation, which can be exploited for component selection.
All replication files are available at: \url{https://github.com/cbecquart/DFTU-Replication}.

\begin{appendix}

\renewcommand{\thetable}{\Alph{section}.\arabic{table}}
\setcounter{table}{0}
\renewcommand{\thealgorithm}{\thesubsection.\arabic{algorithm}}
\setcounter{algorithm}{0}

\section{Proofs}\label{appA}

\subsection{\texorpdfstring{Proof of the existence of $s^*$ }{}}\label{app:existence}

This subsection proves the existence of $s^*$. The proof follows p.~119 in \cite{siffer_new_2019}, adapted to the univariate case.  

\begin{PropositionSM}\label{prop:varbounds}
    Let $X$ be a real random variable (r.v.) with finite second moment. Then the function $s \mapsto \Var|X - s|$ is continuous and bounded on $\mathbb{R}$. In particular, $\forall s \in \mathbb{R}, 0 \leq \Var|X - s| \leq \Var X$.
\end{PropositionSM}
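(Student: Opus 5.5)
The plan is to write $\Var|X-s| = \E[(X-s)^2] - (\E|X-s|)^2$ and handle the two terms separately. This also shows the folded variable has finite second moment, since $|X-s|^2=(X-s)^2$ and $X$ has finite second moment. Both terms are finite for every $s$, because $\E|X-s|\le (\E[(X-s)^2])^{1/2}<\infty$.

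For continuity, the first term equals $\Var X + (\E X - s)^2$, a polynomial in $s$, hence continuous. For the second term, the map $s\mapsto m(s):=\E|X-s|$ is $1$-Lipschitz: the triangle inequality gives $\bigl||X-s|-|X-t|\bigr|\le |s-t|$ pointwise, and taking expectations yields $|m(s)-m(t)|\le|s-t|$. Therefore $m(s)^2$ is continuous, and so is the difference.

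For the bounds, the lower bound $\Var|X-s|\ge 0$ is immediate since it is a variance. For the upper bound, I would use the variational characterization $\Var Y=\min_c \E[(Y-c)^2]$. A cleaner route is the identity
\[
\Var Y = \tfrac12\,\E[(Y-Y')^2]
\]
with $Y'$ an independent copy. Apply it with $Y=|X-s|$ and $Y'=|X'-s|$, where $X'$ is an independent copy of $X$. The reverse triangle inequality gives $\bigl||X-s|-|X'-s|\bigr|\le |X-X'|$ pointwise, hence
\[
\Var|X-s|=\tfrac12\E\bigl[(|X-s|-|X'-s|)^2\bigr]\le \tfrac12\E[(X-X')^2]=\Var X.
\]
This bound is uniform in $s$, so the function is bounded on $\mathbb{R}$.

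None of the steps is a real obstacle. The only point needing care is justifying the independent-copy identity, which requires the finite second moment. Expanding the square gives $\E[(Y-Y')^2]=2\E[Y^2]-2(\E Y)^2$ by independence, and every quantity involved is finite. If one prefers to avoid independent copies, an alternative is to note that $x\mapsto|x-s|$ is $1$-Lipschitz and use the general fact that a $1$-Lipschitz map does not increase variance, which is proved by exactly this same argument.
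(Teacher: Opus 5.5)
Your proof is correct. It agrees with the paper on nonnegativity but takes a different route for the upper bound.

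The paper works directly with the decomposition $\Var|X-s| = \Var X + \E^2[X-s] - \E^2[|X-s|]$, which you also wrote down in your continuity step. It then closes with Jensen's inequality, $|\E[X-s]| \le \E|X-s|$. So once that identity is in hand, the upper bound is one line. This exploits the special fact that $|X-s|^2=(X-s)^2$: folding leaves the second moment unchanged, and the variance reduction equals exactly $\E^2[|X-s|] - \E^2[X-s] \ge 0$. The paper reuses that explicit expression later, for instance as $\Var|X-s| = 1+s^2-\E^2[|X-s|]$ in the Dirac-mixture computations.

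Your independent-copy argument instead proves the more general principle that any $1$-Lipschitz map contracts variance. It does not rely on the second-moment identity, so it would also cover other folding-type transformations. The trade-off is that it delivers only the inequality, not the explicit gap.

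On continuity, your $1$-Lipschitz bound on $s\mapsto\E|X-s|$ is more careful than the paper. The paper simply asserts that $v$ is a composition of continuous functions, without justifying why the expectation is continuous in $s$.
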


\begin{proof}
    We have
    $$\Var|X - s| = \Var X + \E^2[X - s] - \E^2[|X - s|]$$
    Define \(v : \mathbb{R} \to \mathbb{R}\) by \(v(s) = \Var|X - s|\). Since it is a composition of continuous functions, $v$ is continuous. Moreover, \(v(s) \ge 0\) for all \(s \in \mathbb{R}\), as it is a variance. 
    
    By Jensen's inequality, $| \E[X - s]| \leq \E[|X - s|]$, which implies $\E^2[X - s] \leq \E^2[|X - s|]$ and $\Var|X - s| \leq \Var X$.

\end{proof}

\begin{PropositionSM}\label{prop:existence}
    Let $X$ be a real r.v. with finite second moment. Then the function $s \mapsto \Var|X - s|$ attains an absolute minimum on $\mathbb{R}$.
\end{PropositionSM}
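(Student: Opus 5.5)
The plan is to show that $v(s)=\Var|X-s|$ tends to a limit as $|s|\to\infty$. That limit lies above or equal to some attained value, so a compactness argument finishes the proof. By Proposition~\ref{prop:varbounds}, $v$ is continuous and satisfies $0\le v(s)\le \Var X$. I will use the identity from that proof,
$$v(s)=\Var X+\E^2[X-s]-\E^2|X-s|=\Var X-\bigl(\E|X-s|-|\E[X-s]|\bigr)\bigl(\E|X-s|+|\E[X-s]|\bigr).$$

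\textbf{Step 1: behaviour at infinity.} For $s\ge 0$ I will show $\E|X-s|-(s-\E X)=2\E[(X-s)^+]\to 0$ as $s\to+\infty$. I also need the stronger fact that $(s-\E X)\,\E[(X-s)^+]\to 0$. This holds because $s\,\E[(X-s)^+]\le \E[X^2\mathbbm{1}_{\{X>s\}}]$ for $s>0$, and the right-hand side tends to $0$ by dominated convergence using the finite second moment. Expanding the product in the identity then gives $\E^2|X-s|-\E^2[X-s]=4(s-\E X)\E[(X-s)^+]+4\E^2[(X-s)^+]\to 0$. Hence $v(s)\to \Var X$ as $s\to+\infty$, and symmetrically as $s\to-\infty$. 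Since Proposition~\ref{prop:varbounds} gives $v\le \Var X$, the value $\Var X$ is the supremum of $v$, approached at infinity.

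\textbf{Step 2: compactness.} I split into two cases.
\begin{itemize}
\item If $v\equiv \Var X$, every $s$ is a minimizer and we are done.
\item Otherwise pick $s_0$ with $v(s_0)<\Var X$. Set $\epsilon=\Var X-v(s_0)>0$. By Step 1 there is $M>|s_0|$ with $v(s)>v(s_0)$ whenever $|s|>M$. On the compact interval $[-M,M]$ the continuous function $v$ attains a minimum at some $s^*$ by Weierstrass' theorem. Then $v(s^*)\le v(s_0)<v(s)$ for all $|s|>M$, so $s^*$ is an absolute minimizer on $\mathbb{R}$.
\end{itemize}

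\textbf{Main obstacle.} The delicate point is Step 1. The two terms $\E^2|X-s|$ and $\E^2[X-s]$ each grow like $s^2$, so showing that their difference vanishes requires the tail estimate $s\,\E[(X-s)^+]\to 0$. This is exactly where the finite second moment enters, through the bound $s(X-s)^+\le X^2\mathbbm{1}_{\{X>s\}}$ for $s>0$.

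An alternative route avoids this estimate: use $\liminf_{|s|\to\infty} v(s)\ge \inf v$, which holds trivially, together with the lower semicontinuity of $v$. However, this needs some coercivity to exclude minimizing sequences escaping to infinity, which is precisely what Step 1 supplies. So I will keep the limit argument.
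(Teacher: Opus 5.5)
Your proof is correct and follows the same route as the paper: establish $v(s)\to\Var X$ as $|s|\to\infty$, then apply the extreme value theorem on a compact interval. Your tail bound $s\,\E[(X-s)^+]\le \E[X^2\mathbbm{1}_{\{X>s\}}]$ actually proves the expansion $\E|1+\alpha X| = 1+\alpha\E[X]+o(\alpha^2)$, which the paper only asserts. Your case split ($v\equiv\Var X$ versus some $v(s_0)<\Var X$) also makes explicit the choice of $\varepsilon$ needed for the minimum on $[-M,M]$ to be global, which the paper's final step leaves implicit.
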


\begin{proof}
    
    We have
    \begin{align*}
    v(s)
    &= \Var X + \E^2[X - s] - \E^2[|X - s|] \\
    &= \Var X + s^2 \left(1 - \frac{\E[X]}{s}\right)^2
       - s^2 \left( \E\left[\left|1 - \frac{X}{s}\right|\right] \right)^2 .
    \end{align*}
    
    As $s \to \pm\infty$, we use the second-order expansion
    \[
    \E\!\left[|1+\alpha X|\right]
    = 1 + \alpha \E[X] + o(\alpha^2)
    \quad \text{as } \alpha \to 0.
    \]
    which yields
    \begin{align*}
    v(s)
    &= \Var X
    + s^2 \left(1 - \frac{\E[X]}{s}\right)^2
    - s^2 \left(1 - \frac{\E[X]}{s} + o\!\left(\frac{1}{s^2}\right)\right)^2 \\
    &= \Var X + o(1),
    \quad \text{as } s \to \pm\infty.
    \end{align*}

    The function $v$ is continuous and bounded between $0$ and $\Var X$
(Proposition~\ref{prop:varbounds}). Let $\varepsilon > 0$. Then there exists
$s_0 > 0$ such that $\Var X - \varepsilon \leq v(s) \leq \Var X \forall |s| > s_0$. The restriction of $v$ to $[-s_0,s_0]$ is then continuous on a bounded interval, so it attains its bounds by the Extreme Value Theorem.

\end{proof}

\subsection{\texorpdfstring{Proof of the affine invariance of $\phi^*(X)$ and $\phi^{**}(X)$ }{}}

\begin{PropositionSM} \label{prop:pivot_equi}
    The pivots $s^*$ and $s^{**}$ are equivariant under affine transformations. The corresponding folding ratios $\phi^*(X)$ and $\phi^{**}(X)$ are invariant under affine transformations. 
\end{PropositionSM}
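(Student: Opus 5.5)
We consider an affine map $Y = aX + b$ with $a \neq 0$ and $b \in \mathbb{R}$, and work directly from the definitions of $s^*$, $s^{**}$, $\phi^*$ and $\phi^{**}$. The key identity is that for every $s \in \mathbb{R}$,
$$|Y - (as+b)| = |a|\,|X - s|, \qquad (Y-(as+b))^2 = a^2 (X-s)^2 .$$
Moreover, the map $s \mapsto as+b$ is a bijection of $\mathbb{R}$. Hence every candidate pivot $t$ for $Y$ can be written uniquely as $t = as+b$, and we reparametrize the minimization over $t$ as a minimization over $s$.

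\textbf{Step 1: exact pivot.} From the identity, $\Var|Y-(as+b)| = a^2\,\Var|X-s|$. Since $a^2>0$ is a constant, $s$ minimizes $\Var|X-s|$ if and only if $as+b$ minimizes $\Var|Y-t|$. Therefore the set of exact pivots satisfies $\argmin_t \Var|Y-t| = a\,\argmin_s \Var|X-s| + b$. This is the equivariance statement, read as an equality of sets because uniqueness is not guaranteed; existence is given by Proposition~\ref{prop:existence}. It also gives $\min_t \Var|Y-t| = a^2 \min_s \Var|X-s|$.

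\textbf{Step 2: approximate pivot.} The same argument applies with $\Var[(Y-(as+b))^2] = a^4\,\Var[(X-s)^2]$. The uniqueness of $s^{**}$ (Theorem~3.1 of \citet{siffer_are_2018}) then gives $s^{**}(Y) = a\,s^{**}(X)+b$. This step requires the finite fourth moment assumed at the start of Section~\ref{sec:folding}.

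\textbf{Step 3: invariance of the ratios.} We use $\Var Y = a^2 \Var X$, which is nonzero whenever $X$ is nondegenerate, an implicit assumption of the definition. For the exact ratio, $\phi^*(Y) = a^2\min_s\Var|X-s| / (a^2\Var X) = \phi^*(X)$. For the approximate ratio, $\phi^{**}(Y) = \Var|Y - s^{**}(Y)|/\Var Y = a^2\Var|X-s^{**}(X)|/(a^2\Var X) = \phi^{**}(X)$ by Step 2.

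Invariance of $\Phi^*$ and $\Phi^{**}$ follows because $\phi^*(U)$ is a fixed constant. The argument is routine. The only delicate point is the non-uniqueness of $s^*$, which Step 1 handles by stating equivariance as an equality of argmin sets; any selection rule that commutes with the affine map, such as taking the smallest minimizer when $a>0$, is then preserved.
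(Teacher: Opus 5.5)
Your proof is correct and follows the same route as the paper. The identity $\Var|aX+b-(as+b)| = a^2\,\Var|X-s|$, together with its squared analogue and the bijective reparametrization $\tilde{s} = as+b$, transfers minimizers; the factor $a^2$ then cancels against $\Var(aX+b) = a^2\Var X$ in the ratios. Stating the equivariance of $s^*$ as an equality of argmin sets, and writing out the scaling factors, makes explicit what the paper's proof leaves implicit.
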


\begin{proof}
    Let us consider an affine transformation of the random variable $X$, $\tilde{X} = aX+b$, for real numbers $a,b$ such that $a\neq0$.
    \begin{itemize}
        \item[(i)] We have:
        \[
        \begin{aligned}
        s^* = \arg\min_{s \in \mathbb{R}} \Var |X - s|
        &\iff \Var |X - s^*| \le \Var |X - s|,
        \quad \forall s \in \mathbb{R}, \\[0.5em]
        &\iff \Var |aX + b - a s^* - b|
           \le \Var |aX + b - a s - b|, \\
        &\qquad \forall s \in \mathbb{R}.
        \end{aligned}
        \]
    
    \noindent Let $\tilde{s} = as+b$ and $\tilde{s}^* = as^*+b$.
    
    Then, $\Var|\tilde{X}-\tilde{s}^*| \le \Var|X-\tilde{s}|, \forall \tilde{s} \in \mathbb{R} \;\iff\; \tilde{s}^* = \displaystyle\argmin_{\tilde{s} \in \mathbb{R}} \Var|\tilde{X}-\tilde{s}|$.
    
    Similarly, for $s^{**}$, we obtain $\tilde{s}^{**}=a s^{**}+b=\displaystyle\argmin_{\tilde{s} \in \mathbb{R}} \Var(\tilde{X}-\tilde{s})^2$.
    \item[(ii)] We also have:
$$\phi^{*}(\tilde{X}) = \frac{\Var|\tilde{X} - \tilde{s}^{*}|}{\Var\tilde{X}} = \frac{\Var|X - s^{*}|}{\Var X} = \phi^{*}(X) \quad \text{and} \quad \phi^{**}(\tilde{X})=\phi^{**}(X).$$
    \end{itemize}

\end{proof}

\subsection{\texorpdfstring{Proof of $\phi^*(U) =\phi^{**}(U) =  1/4$ }{}}\label{app:phi_1/4}

\begin{PropositionSM}\label{prop:phi_u}
The folding ratios $\phi^*(U)$ and $\phi^{**}(U)$ are equal to $1/4$, where $U \sim \mathcal{U}([l, u])$ on any bounded interval $[l, u]$. 
\end{PropositionSM}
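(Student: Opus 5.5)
By the affine invariance of Proposition~\ref{prop:pivot_equi}, we may reduce to $U \sim \mathcal{U}([-1,1])$, for which $\Var U = 1/3$. We then compute both pivots explicitly and evaluate the ratios.

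\emph{Approximate pivot.} Expand
\[
\Var[(U-s)^2] = \Var[U^2 - 2sU] = \Var U^2 - 4s\,\Cov(U^2,U) + 4s^2 \Var U .
\]
By symmetry $\Cov(U^2,U) = \E[U^3] = 0$, so the expression is $\Var U^2 + 4s^2/3$. This is uniquely minimized at $s^{**} = 0$.

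\emph{Exact pivot.} We work with $v(s) = \Var U + s^2 - \E^2|U-s|$, as in the proof of Proposition~\ref{prop:varbounds}, using $\E[U]=0$.

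For $|s|\ge 1$, $|U-s|$ is an affine function of $U$, so $v(s) = \Var U = 1/3$. This equals the upper bound, so no minimum lies there.

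For $|s| \le 1$, a direct integration gives
\[
\E|U-s| = \tfrac12\int_{-1}^{1}|u-s|\,du = \frac{1+s^2}{2},
\]
hence
\[
v(s) = \tfrac13 + s^2 - \tfrac{(1+s^2)^2}{4} = \tfrac{1}{12} + \tfrac{s^2}{2} - \tfrac{s^4}{4}.
\]
The derivative is $v'(s) = s(1-s^2)$, which is nonnegative on $[0,1]$. So on $[-1,1]$ the minimum is at $s=0$, where $v(0) = 1/12$, while the endpoints give $v(\pm1) = 1/3$.

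Therefore $s^* = s^{**} = 0$, and
\[
\phi^*(U) = \phi^{**}(U) = \frac{1/12}{1/3} = \frac14 .
\]
As a cross-check, $|U|\sim\mathcal{U}([0,1])$ has variance $1/12$.

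The only real work is the exact pivot, since $s \mapsto \Var|U-s|$ need not be convex in general. Here the closed form of $\E|U-s|$ reduces the problem to a one-variable quartic. The remaining step is to rule out minimizers outside $[-1,1]$, which the constancy of $v$ there handles.
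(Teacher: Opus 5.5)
Your proof is correct and follows the paper's route: reduce by affine invariance to a symmetric uniform, compute $\Var|U-s| = \frac{1}{12}+\frac{s^2}{2}-\frac{s^4}{4}$ for $s$ in the support, and show the minimum is at $s=0$. It is slightly more complete than the paper's argument in three ways: you rule out pivots outside the support (where $\Var|U-s|=\Var U$), you use the sign of $v'$ rather than a second-derivative test to get the global minimum, and you derive $s^{**}=0$ directly from $\Var[(U-s)^2]=\Var U^2+\tfrac{4}{3}s^2$ rather than citing the literature.
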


\begin{proof}

\noindent By the affine invariance of $\phi^*(X)$ and $\phi^{**}(X)$, it is sufficient to consider $U \sim \mathcal{U}([-R, R])$ with $R > 0$.\\
\noindent For $\phi^*(U)$: Let $U \sim \mathcal{U}\big([-R, R]\big)$ for $R \in \mathbb{R}^*_{+}$.
We know that $\E[U] = 0$ and $\Var[U] = \frac{R^2}{3}$.
$$\phi^*(U) = \frac{\Var |U - s^*|}{\Var\,U},$$
where $s^* = \displaystyle\argmin_{s \in [-R, R]} \Var |U - s|$. For $s \in [-R, R]$, $\Var |U - s| = \E\big[(U-s)^2\big] - \E^2\big[|U - s|\big]$ where  $\E\big[(U-s)^2\big] = s^2 + \frac{R^2}{3}$ and $\E\big[|U - s|\big] = \frac{1}{2R}(R^2 + s^2)$. We obtain:
$$\Var |U - s| = \frac{1}{12} \Big( R^2 + 6s^2 - \frac{3s^4}{R^2} \Big), \quad \phi^*(U) = \frac{1}{4R^2} \Big( R^2 + 6s^2 - \frac{3s^4}{R^2} \Big).$$

There are three solutions to
$\frac{\partial \Var |U - s|}{\partial s} = s(1 - \frac{s^2}{R^2}) = 0$: $s=0$ or $s=R$ or $s=-R$. 
The only solution for which $\frac{\partial^2 \Var |U - s| }{\partial s^2} > 0$ is $s=0$. Thus, $s^*=0$, and the associated folding ratio $\phi^*(U) = 1/4$.\\
\noindent For $\phi^{**}(U)$: the proof is available in \cite{siffer_new_2019}.

\end{proof}

\subsection{Proof of the affine invariance of \texorpdfstring{$\Phi^*(X)$}{} and \texorpdfstring{$\Phi^{**}(X)$}{}}\label{app:aff_inv}

\begin{PropositionSM} \label{prop:inv}
    The SFR $\Phi^*(X)$ and $\Phi^{**}(X)$ are invariant by affine transformations. 
\end{PropositionSM}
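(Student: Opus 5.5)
The invariance follows from the two earlier results on affine transformations: Proposition~\ref{prop:pivot_equi}, which gives invariance of the unstandardized folding ratios $\phi^*$ and $\phi^{**}$, and Proposition~\ref{prop:phi_u}, which shows the normalizing constant is the same for every uniform distribution. Let $\tilde X = aX+b$ with $a\neq 0$ and $b\in\mathbb{R}$. By definition,
\[
\Phi^*(\tilde X)=\frac{\phi^*(\tilde X)}{\phi^*(U)}, \qquad \Phi^{**}(\tilde X)=\frac{\phi^{**}(\tilde X)}{\phi^{**}(U)},
\]
where $U$ is uniform on some bounded interval.

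The first step is to make sure the denominator does not depend on $X$ or on $(a,b)$. Proposition~\ref{prop:phi_u} gives $\phi^*(U)=\phi^{**}(U)=1/4$ for every bounded interval $[l,u]$. So the denominators of $\Phi^*(X)$ and $\Phi^*(\tilde X)$ coincide, and likewise for the approximate versions. One could worry that the reference uniform should be transformed together with $X$, i.e.\ replaced by $aU+b$. But $aU+b$ is again uniform on a bounded interval, so its ratio is still $1/4$, either by Proposition~\ref{prop:phi_u} or by the invariance in Proposition~\ref{prop:pivot_equi}.

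The second step handles the numerators. The pivot equivariance of Proposition~\ref{prop:pivot_equi}, namely $\tilde s^*=as^*+b$ and $\tilde s^{**}=as^{**}+b$, gives $|\tilde X-\tilde s^*| = |a|\,|X-s^*|$. Hence $\Var|\tilde X-\tilde s^*| = a^2\Var|X-s^*|$, and since $\Var\tilde X = a^2\Var X$, the factor $a^2$ cancels. The same holds with $s^{**}$, which yields $\phi^*(\tilde X)=\phi^*(X)$ and $\phi^{**}(\tilde X)=\phi^{**}(X)$.

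Dividing by the common constant $1/4$ gives $\Phi^*(\tilde X)=4\phi^*(\tilde X)=4\phi^*(X)=\Phi^*(X)$, and in the same way $\Phi^{**}(\tilde X)=\Phi^{**}(X)$. One point needs care: when the exact pivot is not unique, the equivariance should be read as a bijection $s\mapsto as+b$ between the two sets of minimizers. The minimal value of $\Var|\cdot - s|$ is scaled by exactly $a^2$, so $\phi^*$ is well defined whichever minimizer is chosen, and the argument does not depend on that choice. I do not expect a substantive obstacle; the only real issue is this bookkeeping of the denominator and of non-unique pivots.
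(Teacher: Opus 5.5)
Your proof is correct and follows the same route as the paper: the denominator is the constant $1/4$ by Proposition~\ref{prop:phi_u}, and the numerators $\phi^*$ and $\phi^{**}$ are affine invariant by Proposition~\ref{prop:pivot_equi}, so $\Phi^*(\tilde X)=\Phi^*(X)$ and $\Phi^{**}(\tilde X)=\Phi^{**}(X)$. Your extra remarks on the $a^2$ cancellation and on non-unique exact pivots only spell out details the paper leaves implicit in Proposition~\ref{prop:pivot_equi}.
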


\begin{proof}
    Let us consider an affine transformation of the random variable $X$, $\tilde{X} = aX+b$, where $a, b \in \mathbb{R}$ and $a\neq0$. From Proposition~\ref{prop:phi_u}, we know that $\phi^*(U) =\phi^{**}(U) =  1/4$, where $U \sim \mathcal{U}\big([l, u]\big)$ on any bounded interval $[l, u]$. Using the affine invariance of $\phi^*(X)$ and $\phi^{**}(X)$ (Proposition~\ref{prop:pivot_equi}), we have 
    
    $$\Phi^{*}(\tilde{X}) = \frac{\phi^*(\tilde{X})}{\phi^*(U)} = \frac{\phi^*(X)}{\phi^*(U)} = \Phi^{*}(X) \quad \text{and similarly} \quad \Phi^{**}(\tilde{X})=\Phi^{**}(X).$$

\end{proof}

\subsection{Proof of Proposition \protect\ref{th:exact}}\label{app:thexact}

In this paragraph, we provide the proof of the explicit expression for $\Phi^*(X)$ for the $k$-Dirac mixture model (2), and establish the conditions under which the ratio leads to false indications of unimodality.  $X$ is standardized ($\E[X] = 0$ and $ \Var[X]=1$) with ordered locations $\mu_1 <  \cdots  < \mu_k$. Let us suppose that for any $g\in \{1,\ldots,k-1\}$, we define $\alpha_g = \sum_{i=1}^{g} \varepsilon_i \mu_i$ and $\eta_g = \sum_{i=1}^{g} \varepsilon_i$.

In order to prove Proposition \ref{th:exact}, we will need the following lemmas. 

\begin{lemma}\label{lem:var}
This lemma is used in the proof of Proposition~\ref{th:exact}.

The following results give explicit expressions for $\Var|X-s|$.
\begin{enumerate}
    \item $\Var|X-s| = 1 + s^2 - \E^2[|X-s|].$ 
    \item For all $g\in \{1,\ldots,k-1\}$ and $s\in [\mu_g, \mu_{g+1}]$, the following three equivalent expressions hold:
    \begin{eqnarray*}
        \Var|X-s| &=& 1 - 4\eta_g(1-\eta_g)\left(\frac{\alpha_g}{\eta_g} - s\right)\left(\frac{\alpha_g}{1-\eta_g} + s\right)\\
        &=& 1-4(\alpha_g-\eta_g s)(\alpha_g+(1-\eta_g)s)\\
        &=& 4\eta_g (1-\eta_g)s^2+4(2\eta_g-1)\alpha_g s + 1-4\alpha_g^2.\\
    \end{eqnarray*}
    \end{enumerate}
\end{lemma}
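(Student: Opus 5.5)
The proof is a direct computation from the variance decomposition $\Var|X-s| = \E[(X-s)^2] - \E^2[|X-s|]$, using the standardization $\E[X]=0$, $\Var X=1$ throughout.

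\emph{Part 1.} Expanding the second moment gives $\E[(X-s)^2] = \E[X^2] - 2s\E[X] + s^2 = 1 + s^2$. Subtracting $\E^2[|X-s|]$ yields the first expression at once.

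\emph{Part 2.} Fix $g$ and $s\in[\mu_g,\mu_{g+1}]$. Because the locations are ordered, $|\mu_i - s| = s-\mu_i$ for $i\le g$ and $|\mu_i-s| = \mu_i - s$ for $i\ge g+1$. This holds even at the endpoints, where the sign ambiguity multiplies a zero term. Using $\sum_{i=1}^k \varepsilon_i\mu_i = 0$, the mass above $g$ satisfies $\sum_{i>g}\varepsilon_i\mu_i = -\alpha_g$, and its weight is $1-\eta_g$. Therefore
\[
\E|X-s| = (\eta_g s - \alpha_g) + (-\alpha_g - (1-\eta_g)s) = (2\eta_g - 1)s - 2\alpha_g .
\]
Substituting into Part 1 and expanding,
\[
\Var|X-s| = 1 + s^2 - \bigl((2\eta_g-1)s - 2\alpha_g\bigr)^2 = 1 + \bigl(1-(2\eta_g-1)^2\bigr)s^2 + 4(2\eta_g-1)\alpha_g s - 4\alpha_g^2 .
\]
Since $1-(2\eta_g-1)^2 = 4\eta_g(1-\eta_g)$, this is the third stated form.

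For the second form, expand $(\alpha_g - \eta_g s)(\alpha_g + (1-\eta_g)s) = \alpha_g^2 + (1-2\eta_g)\alpha_g s - \eta_g(1-\eta_g)s^2$. Multiplying by $-4$ and adding $1$ recovers the third form.

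For the first form, note that $\eta_g\in(0,1)$ because all $\varepsilon_i>0$ and $g\le k-1$. We can then factor
\[
(\alpha_g - \eta_g s)(\alpha_g + (1-\eta_g)s) = \eta_g(1-\eta_g)\Bigl(\frac{\alpha_g}{\eta_g} - s\Bigr)\Bigl(\frac{\alpha_g}{1-\eta_g} + s\Bigr),
\]
which gives the first expression.

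The only point needing care is the sign bookkeeping for $|X-s|$ on the interval, including the endpoints. The rest is algebra.
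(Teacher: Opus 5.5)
Your proof is correct and follows the paper's route: for Part 1 expand $\E[(X-s)^2]=1+s^2$, then on $[\mu_g,\mu_{g+1}]$ use the ordering and $\E X=0$ to get $\E|X-s|=(2\eta_g-1)s-2\alpha_g$, and finish by algebra. The only cosmetic difference is the order of the algebra: the paper rewrites $\E^2|X-s|$ directly as $4(\alpha_g-\eta_g s)(\alpha_g+(1-\eta_g)s)+s^2$ and so reaches the second form first, whereas you reach the expanded third form first; your remarks on the endpoints and on $\eta_g\in(0,1)$ (needed for the division) are details the paper leaves implicit.
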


\begin{proof}
\begin{enumerate} 

\item The expression is obtained by expanding:
\begin{eqnarray*}
\Var|X - s| &=& \E[(X - s)^2] - \E^2\bigl[|X - s|\bigr] \\
            &=& \E[X^2] - 2s\,\E[X] + s^2 - \E^2\bigl[|X - s|\bigr].
\end{eqnarray*}
Since $\E[X] = 0$ and $\E[X^2] = 1$, we obtain the result.

\item For all $g\in \{1,\ldots,k-1\}$ and $s\in [\mu_g, \mu_{g+1}]$, results are obtained using Lemma~\ref{lem:var}(1.) and the following expression:
    \begin{eqnarray*}
    \E^2\big[|X-s|\big] &=&  \big(- 2\alpha_g + s(2\eta_g - 1)\big)^2\\
    &=& 4(\alpha_g - s\eta_g)\big(\alpha_g + s(1-\eta_g)\big) + s^2.
\end{eqnarray*}
\end{enumerate}

\end{proof}

\begin{lemma} \label{lem:parab}
    This lemma is used in the proof of Proposition~\ref{th:exact}.
    
    The function $s\mapsto \Var|X-s|$ is continuous and consists of piecewise parabolic segments defined on each interval $[\mu_g,\mu_{g+1}]$, for $g\in \{1,\ldots,k-1\}$.
\end{lemma}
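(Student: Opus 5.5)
We prove Lemma~\ref{lem:parab} by working directly from the explicit formulas of Lemma~\ref{lem:var}; no new computation is needed.

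\textbf{Continuity.} We get this either from Proposition~\ref{prop:varbounds}, or directly. Since $X$ takes finitely many values, $s\mapsto \E|X-s| = \sum_{i=1}^k \varepsilon_i |\mu_i - s|$ is a finite sum of continuous functions. Hence $s\mapsto \E^2|X-s|$ is continuous. By Lemma~\ref{lem:var}(1.), $\Var|X-s| = 1+s^2-\E^2|X-s|$, which is therefore continuous on $\mathbb{R}$.

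\textbf{Parabolic pieces.} Fix $g\in\{1,\ldots,k-1\}$ and take $s\in[\mu_g,\mu_{g+1}]$. The ordering $\mu_1<\cdots<\mu_k$ fixes the sign of each $\mu_i - s$: it is $\le 0$ for $i\le g$ and $\ge 0$ for $i\ge g+1$. So $\E|X-s|$ is an affine function of $s$ on that interval, with coefficients determined by $(\alpha_g,\eta_g)$. This is what Lemma~\ref{lem:var}(2.) records. Its third expression,
\[
\Var|X-s| = 4\eta_g(1-\eta_g)s^2 + 4(2\eta_g-1)\alpha_g s + 1 - 4\alpha_g^2,
\]
is a polynomial of degree at most two in $s$ on $[\mu_g,\mu_{g+1}]$. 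Because $0<\eta_g<1$ for $g\le k-1$ (all $\varepsilon_i>0$), the leading coefficient $4\eta_g(1-\eta_g)$ is strictly positive. Each segment is therefore a genuine upward-opening parabola, which we note for use in the proof of Proposition~\ref{th:exact}.

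\textbf{Consistency at the nodes.} At a shared endpoint $s=\mu_{g+1}$, the term $|\mu_{g+1}-s|$ vanishes. Its assignment to either group is therefore immaterial, so the formulas for indices $g$ and $g+1$ agree there. This agrees with the continuity shown above.

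\textbf{Outer half-lines.} Outside $[\mu_1,\mu_k]$, $|X-s|$ equals $\pm(X-s)$ almost surely, so $\Var|X-s|=\Var X = 1$ is constant. The lemma is stated only on $[\mu_1,\mu_k]$, but we may remark that the minimum cannot lie outside this interval.

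The only delicate step is the sign bookkeeping that makes $\E|X-s|$ affine on each interval. It follows directly from the ordering of the locations, so there is no real obstacle.
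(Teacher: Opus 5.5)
Your proof is correct, but it establishes continuity by a different and simpler route than the paper.

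Like you, the paper reads the piecewise-parabolic structure directly off Lemma~\ref{lem:var}(2.) and treats continuity as the only thing left to prove. At each node $\mu_g$ it evaluates the last expression of Lemma~\ref{lem:var}(2.) from the left with $(\alpha_{g-1},\eta_{g-1})$ and from the right with $(\alpha_g,\eta_g)$. It then uses the recursions $\alpha_g=\alpha_{g-1}+\varepsilon_g\mu_g$ and $\eta_g=\eta_{g-1}+\varepsilon_g$ to show that the two one-sided values coincide.

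You instead obtain continuity globally, with no computation, from $\E|X-s|=\sum_i\varepsilon_i|\mu_i-s|$ and Lemma~\ref{lem:var}(1.). Your node remark also gives the conceptual reason the paper's identity holds. The recursion only moves index $g$ from one group to the other, and its contribution $\varepsilon_g|\mu_g-s|$ vanishes at $s=\mu_g$. Equivalently, both adjacent formulas are valid on closed intervals, hence at the shared endpoint.

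Your version buys two further things:
\begin{itemize}
\item it covers the outer half-lines, where the function is identically $1$;
\item it states explicitly that each piece is strictly convex, since $4\eta_g(1-\eta_g)>0$. The paper uses this only tacitly in Lemma~\ref{lem:sg}, when it asserts that the stationary point on each interval is a minimum.
\end{itemize}

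The paper's computation amounts to an explicit consistency check of the closed-form coefficients, which your argument makes unnecessary.
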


\begin{proof}
Lemma \ref{lem:var} states that the function is piecewise parabolic and we just need to prove that it is continuous. To do so, we use the last expression of $\Var|X-s|$ from Lemma \ref{lem:var}(2.). 
To verify the continuity of the function $s\mapsto \Var|X-s|$ at each point $s=\mu_g$, we evaluate the function from the left-hand limit ($s=\mu_g^-$) and the right-hand limit  ($s=\mu_g^+$), and confirm that they are equal.
Using the fact that $\alpha_g = \alpha_{g-1} + \varepsilon_g \mu_g$, and $\eta_g = \eta_{g-1} + \varepsilon_g$, we obtain:
\begin{eqnarray*}
    \Var|X-\mu_g^+| - \Var|X-\mu_g^-| &=& 0.
\end{eqnarray*}

\end{proof}

\begin{lemma} \label{lem:sg}
This lemma is used in the proof of Proposition~\ref{th:exact}.

For each $g \in \{1,\ldots,k-1\}$, define
$s_g =  \frac{\alpha_g}{2}\left(\frac{1}{\eta_g} - \frac{1}{1-\eta_g}\right)$. Then, the minimizer of $\Var|X-s|$ over the interval \([\mu_g, \mu_{g+1}]\) is given by
$$  s^\dagger_g =  \argmin_{s \in [\mu_g, \mu_{g+1}]} \Var|X-s|=  s_g - \min(0, s_g - \mu_g) -\max(0, s_g - \mu_{g+1}),
$$
which is equivalent to
\[
s_g^\dagger =
\begin{cases}
\mu_g, & \text{if } s_g < \mu_g,\\
s_g, & \text{if } s_g \in [\mu_g, \mu_{g+1}],\\
\mu_{g+1}, & \text{if } s_g > \mu_{g+1}.
\end{cases}
\]

An exact pivot can be chosen as any solution of the minimization problem 
$$
s^* = \argmin_{s \in \mathbb{R}} \Var|X-s| = s^\dagger_{g^*}, \quad \text{where} \quad g^* = \argmin_{g \in \{1,\ldots,k-1\}} \Var|X - s^\dagger_g|.
$$
    
\end{lemma}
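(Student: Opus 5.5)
We prove Lemma~\ref{lem:sg} by working on each interval $[\mu_g,\mu_{g+1}]$ separately and then patching the pieces together with the global existence result.

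\textbf{Step 1: the unconstrained minimizer on one piece.} Fix $g\in\{1,\ldots,k-1\}$. By the third expression in Lemma~\ref{lem:var}(2.), on $[\mu_g,\mu_{g+1}]$ we have
\[
\Var|X-s| = q_g(s) := 4\eta_g(1-\eta_g)s^2 + 4(2\eta_g-1)\alpha_g s + 1 - 4\alpha_g^2 .
\]
Since $0<\eta_g<1$ for $g\le k-1$ (all $\varepsilon_i>0$), the leading coefficient $4\eta_g(1-\eta_g)$ is strictly positive. Hence $q_g$ is a strictly convex parabola. Its unique unconstrained minimizer is
\[
-\frac{(2\eta_g-1)\alpha_g}{2\eta_g(1-\eta_g)}.
\]
Using $\frac{1}{\eta_g}-\frac{1}{1-\eta_g} = \frac{1-2\eta_g}{\eta_g(1-\eta_g)}$, this equals $s_g$. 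This algebraic identity is the only computation needed.

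\textbf{Step 2: restriction to the interval.} A strictly convex quadratic restricted to a closed interval attains its minimum at the projection of its vertex onto that interval. So the minimizer on $[\mu_g,\mu_{g+1}]$ is $\mu_g$ if $s_g<\mu_g$, is $s_g$ if $s_g\in[\mu_g,\mu_{g+1}]$, and is $\mu_{g+1}$ if $s_g>\mu_{g+1}$. This follows because $q_g$ is decreasing to the left of $s_g$ and increasing to the right of it. The closed form $s_g-\min(0,s_g-\mu_g)-\max(0,s_g-\mu_{g+1})$ is checked case by case. For example, when $s_g<\mu_g$ the min term equals $s_g-\mu_g$ and the max term vanishes, which gives $\mu_g$; the other two cases are analogous. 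The equality $\mu_g<\mu_{g+1}$ guarantees that the two correction terms never act at the same time.

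\textbf{Step 3: global minimizer.} We need to show that the global minimizer lies in $[\mu_1,\mu_k]$. By Proposition~\ref{prop:existence} a global minimizer exists. For $s\le\mu_1$ we have $|X-s|=X-s$ almost surely, so $\Var|X-s|=\Var X=1$. Likewise, $\Var|X-s|=1$ for $s\ge\mu_k$. By Proposition~\ref{prop:varbounds}, $1$ is an upper bound for $\Var|X-s|$. Also $\mu_1,\mu_k$ themselves lie in the union of the intervals, with value $1$. Hence
\[
\inf_{\mathbb{R}}\Var|X-s| = \min_{g}\,\min_{[\mu_g,\mu_{g+1}]}\Var|X-s| = \min_g \Var|X-s^\dagger_g|.
\]
By continuity (Lemma~\ref{lem:parab}), the value on the shared endpoints is unambiguous. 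Any $g^*$ attaining the outer minimum therefore gives the global minimizer $s^\dagger_{g^*}$, with possible non-uniqueness corresponding to ties.

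The main, though mild, obstacle is Step 3. There we must handle the unbounded regions $s<\mu_1$ and $s>\mu_k$ explicitly, and make sure the positivity of the leading coefficient holds for every $g\le k-1$. The remaining steps are routine.
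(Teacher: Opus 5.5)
Your proof is correct and follows the same route as the paper. On each $[\mu_g,\mu_{g+1}]$ the variance is a parabola with vertex $s_g$, the constrained minimizer is that vertex clamped to the interval, and the global minimum comes from comparing the pieces once the region $s\notin[\mu_1,\mu_k]$ (where $\Var|X-s|=1$, the maximum) is discarded. You read the vertex directly off the third expression of Lemma~\ref{lem:var}(2.) rather than solving the sign-sum first-order condition. You also make explicit three points the paper only sketches: strict convexity via $4\eta_g(1-\eta_g)>0$, the projection/clamping step, and the restriction to $[\mu_1,\mu_k]$.
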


\begin{proof}
From Lemma~\ref{lem:var}(1.), we have

\begin{eqnarray*}
        \Var|X-s| &=& 1 + s^2 - \Big(\sum_{i=1}^k\varepsilon_i\, |\mu_i-s|\Big)^2,
\end{eqnarray*}

and its derivative satisfies
 \begin{eqnarray*}
        \frac{d}{ds} \Var|X-s| &=& 0 \;\iff\; -s =  \Big(\sum_{i=1}^k\varepsilon_i\, \sign(\mu_i-s)\Big) \Big(\sum_{i=1}^k\varepsilon_i\, |\mu_i-s|\Big).
    \end{eqnarray*}

    Since $s\in [\mu_1, \mu_k]$, let us consider $\mu_1 < \cdots < \mu_g < s < \mu_{g+1} < \cdots < \mu_k$, for $g\in \{1, \ldots, k-1\}$. Then, $\sign(\mu_i-s) = -1, \forall i \in \{1, \ldots, g\}$, and $\sign(\mu_i - s) = 1, \forall i \in \{g+1, \ldots, k\}$. It yields
    $$ s_g = \frac{\alpha_g}{2}\left(\frac{1}{\eta_g} - \frac{1}{1-\eta_g}\right).$$
    
    Lemma \ref{lem:parab} states that the function $s\mapsto \Var|X-s|$ is piecewise parabolic. On each interval $[\mu_g, \mu_{g+1}]$, $g\in \{1, \ldots, k-1\}$, the point where the derivative vanishes corresponds to a minimum.

    Moreover, $s^* \in [\mu_1, \mu_k]$. Otherwise, $\Var|X-s| \geq 1 \text{ for all } s \in [\mu_1, \mu_k]$, which contradicts Proposition~\ref{prop:varbounds}.
    Hence, 
    $$
    s^* = \argmin_{s \in \mathbb{R}} \Var|X-s| = s^\dagger_{g^*}, \quad \text{where} \quad g^* = \argmin_{g \in \{1,\ldots,k-1\}} \Var|X - s^\dagger_g|.$$
    
\end{proof}

\begin{lemma} \label{lem:incrs}
This lemma is used in the proof of Proposition~\ref{th:exact}.

For all $g \in \{ 1, \ldots, k-2 \}$, the sequence $(s_g)_{g}$ is strictly increasing, i.e., $s_g < s_{g+1}.$
\end{lemma}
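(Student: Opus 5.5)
The plan is to rewrite $s_g$ as a midpoint of two conditional means. Once that is done, strict monotonicity follows from an elementary averaging argument, and no differentiation or case analysis is needed.

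\textbf{Step 1: rewrite $s_g$.} Since $X$ is standardized, $\E[X]=\sum_{i=1}^k \varepsilon_i\mu_i = 0$, so $\sum_{i=g+1}^{k}\varepsilon_i\mu_i = -\alpha_g$ for every $g\in\{1,\ldots,k-1\}$. Define the mean of the lower group and the mean of the upper group,
\[
m_g = \frac{\alpha_g}{\eta_g} = \frac{\sum_{i=1}^{g}\varepsilon_i\mu_i}{\sum_{i=1}^{g}\varepsilon_i},
\qquad
M_g = \frac{-\alpha_g}{1-\eta_g} = \frac{\sum_{i=g+1}^{k}\varepsilon_i\mu_i}{\sum_{i=g+1}^{k}\varepsilon_i}.
\]
Both are well defined because $0<\eta_g<1$ for $g\le k-1$. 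Directly from the definition in Lemma~\ref{lem:sg},
\[
s_g = \frac{\alpha_g}{2}\left(\frac{1}{\eta_g}-\frac{1}{1-\eta_g}\right) = \frac{m_g+M_g}{2}.
\]
So $s_g$ is the midpoint between the conditional means of the two groups $\{\mu_1,\ldots,\mu_g\}$ and $\{\mu_{g+1},\ldots,\mu_k\}$.

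\textbf{Step 2: both means increase strictly.} Fix $g\le k-2$, so that the upper group $\{\mu_{g+2},\ldots,\mu_k\}$ is still nonempty at index $g+1$. Passing from $g$ to $g+1$ moves the single atom $\mu_{g+1}$ from the upper group to the lower group.

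\emph{Lower mean.} $m_{g+1}$ is a convex combination of $m_g$ and $\mu_{g+1}$ with weights $\eta_g/\eta_{g+1}$ and $\varepsilon_{g+1}/\eta_{g+1}$, both strictly positive. Because the locations are strictly ordered, $m_g\le\mu_g<\mu_{g+1}$. Hence $m_{g+1}>m_g$.

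\emph{Upper mean.} Symmetrically, $M_g$ is a convex combination of $\mu_{g+1}$ and $M_{g+1}$ with positive weights. Since $\mu_{g+1}<\mu_{g+2}\le M_{g+1}$, this gives $M_g<M_{g+1}$.

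\textbf{Step 3: conclude.} Adding the two strict inequalities and dividing by $2$ gives $s_g<s_{g+1}$.

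The only delicate point is the index range. The argument requires $M_{g+1}$ to be defined, which is exactly why the statement is restricted to $g\le k-2$. It also requires the weights $\varepsilon_i$ to be strictly positive and the locations $\mu_i$ to be distinct, which is what makes the inequalities strict.
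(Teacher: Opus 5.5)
Your proof is correct and follows the same route as the paper's. The paper computes $s_{g+1}-s_g$ as a sum of two positively weighted terms, and these are exactly $\tfrac12(M_{g+1}-M_g)$ and $\tfrac12(m_{g+1}-m_g)$ in your notation; it signs them with the same ordering and $\E[X]=0$ facts, getting strictness only from the lower-group term, whereas your conditional-means framing makes the decomposition transparent and gives strictness in both terms.
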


\begin{proof}
    We have $s_g = \frac{\alpha_g}{2} \Big(\frac{1}{\eta_g} - \frac{1}{1-\eta_g} \Big)$ for $g \in \{ 1, \ldots, k-1 \}$. Using $\alpha_{g+1} = \alpha_g + \mu_{g+1} \varepsilon_{g+1}$ and $\eta_{g+1} = \eta_g + \varepsilon_{g+1}$, for $g \in  \{ 1, \ldots, k-2 \}$, we obtain
    $$s_{g+1} - s_g = \frac{\varepsilon_{g+1}}{2(1-\eta_g)(1-\eta_{g+1})} \Big[ -\alpha_g - \mu_{g+1}(1-\eta_g) \Big] + \frac{\varepsilon_{g+1}}{2\eta_g \eta_{g+1}} (-\alpha_g + \mu_{g+1} \eta_g), $$
    where both $\frac{\varepsilon_{g+1}}{2(1-\eta_g)(1-\eta_{g+1})} > 0$ and $\frac{\varepsilon_{g+1}}{2\eta_g \eta_{g+1}} > 0$. As the locations are ordered in model (2):
    \begin{itemize}
        \item $\mu_i \leq \mu_g < \mu_{g+1} \quad \forall i \leq g$. This implies $-\alpha_g + \mu_{g+1}\eta_g >0$.
        \item $\mu_i \geq \mu_{g+1} \quad \forall i > g$. As $\E(X) = 0$, we have $\sum_{i=g+1}^k \varepsilon_i \mu_i = - \alpha_g$. It results in $-\alpha_g - \mu_{g+1}(1-\eta_g) \geq 0$.
    \end{itemize}
    Thus, $s_{g+1} - s_g > 0$. 
    
\end{proof}

\begin{proof} Proof of Proposition \ref{th:exact}

\begin{enumerate}
    \item We want to prove that $s^* = s_{g^*},$ where $g^* = \displaystyle\argmin_{g\in \{1,\ldots,k-1\}} \min_{s\in [\mu_g, \mu_{g+1}]} \Var|X-s|$ and $ s_g = \frac{\alpha_g}{2}\left(\frac{1}{\eta_g} - \frac{1}{1-\eta_g}\right)$.
    Let us prove the result by contradiction. By Lemma~\ref{lem:sg}, if $s^* \neq s_{g^*}$, then $s^* = \mu_{g+1}$ where $g\in \{1, \ldots, k-2\}$. In this case we would have $s^\dagger_g = \mu_{g+1}$ meaning that $s_g \geq \mu_{g+1}$, and $s^\dagger_{g+1} = \mu_{g+1}$ meaning that $s_{g+1} \leq \mu_{g+1}$. It implies that $s_{g+1} \leq s_{g}$ which contradicts Lemma~\ref{lem:incrs}.
    It results that $s^* = s_{g^*}$.
    \item The expression of $\Phi^*(X)$ is then obtained by substituting $s^*$ from the previous result into $\Phi^*(X) = 4 \, \Var|X - s^*|$. 
    \item Solving the equation $\Phi^*(X) = 1$ yields
    $\frac{\alpha_{g^*}^2}{(\sqrt{3}/4)^2} + \frac{\bigl(\eta_{g^*} - \frac{1}{2}\bigr)^2}{(1/2)^2} = 1$,
    which is the canonical equation of an ellipse in the variables $(\alpha_{g^*}, \eta_{g^*})$, centered at $(0, 1/2)$, with semi-minor axis $\sqrt{3}/4$ and semi-major axis $1/2$.
    \item Solving $\Phi^*(X) \geq1$, we obtain
    $$ \frac{\alpha_{g^*}^2}{(\sqrt{3}/4)^2} +   \frac{(\eta_{g^*}-\frac{1}{2})^2}{(1/2)^2} \leq 1 \;\iff\; \alpha_{g^*}^2 \leq \frac{3}{4} \eta_{g^*}(1-\eta_{g^*}).$$
    Since $\alpha_{g^*} \le 0$, this reduces to
    $-\frac{\sqrt{3}}{2}\sqrt{\eta_{g^*}(1-\eta_{g^*})} \le \alpha_{g^*} \le 0$ which proves the result.
    
\end{enumerate}
\end{proof}

\subsection{Proof of Proposition \protect\ref{th:approx}}\label{app:thapprox}
Here, we provide the proof of the explicit expression for $\Phi^{**}(X)$ for the $k$-Dirac mixture model, and the conditions under which the ratio falsely indicates unimodality.

\begin{proof} Proof of Proposition \ref{th:approx}
    \begin{enumerate}
        \item From \cite{siffer_new_2019} (Theorem 6.1), and using that $\E[X] = 0$ and $Var[X] = 1$, we obtain
        $$s^{**} = \frac{\Cov(X, X^2)}{2\Var X} = \frac{\E(X^3) - \E(X)\E(X^2)}{2\Var X} = \frac{\gamma}{2}, \text{~with~} \gamma = \E[X^3].$$

        \item The expression of $\Phi^{**}(X)$ is then obtained by substituting $s^{**}$ from the previous result into $\Phi^{**}(X) = 4 \, \Var|X - s^{**}|$. 

        \item Solving $\Phi^{**}(X) = 1$, we find
        $$\left(4\alpha_{g^{**}} - (2\eta_{g^{**}}-1)\gamma -\sqrt{\gamma^2+3}\right) \left(4\alpha_{g^{**}} - (2\eta_{g^{**}}-1)\gamma +\sqrt{\gamma^2+3}\right) = 0.$$
        Since $0 < \eta_{g^{**}} < 1$,  $-(2\eta_{g^{**}}-1)\gamma \le |\gamma|$. Since $\alpha_{g^{**}} < 0$ and $\sqrt{\gamma^2+3} > |\gamma|$, $4\alpha_{g^{**}} - (2\eta_{g^{**}}-1)\gamma -\sqrt{\gamma^2+3} < 0$. Thus, there is a unique solution, which defines the line
        $$4\alpha_{g^{**}} - (2\eta_{g^{**}}-1)\gamma +\sqrt{\gamma^2+3} =0.$$
        
        \item Solving $\Phi^{**}(X) \ge 1$, we obtain $4\alpha_{g^{**}} - (2\eta_{g^{**}}-1)\gamma +\sqrt{\gamma^2+3} \geq 0$.
        
    \end{enumerate}

\end{proof}

\subsection{Proof of Proposition \protect\ref{prop:dftu3}}\label{app:thdftu3}

We consider a standardized random variable $X$ following a 3-Dirac mixture with means $\mu_1$, $\mu_2$ and $\mu_3$, sorted in ascending order. When the minimizer of $\Var|X-s|$ is not unique, we take $s^*$ to be the smallest minimizer.

\begin{proof} Proof of Proposition \ref{prop:dftu3}

From the constraint \(\sum_{i=1}^3 \varepsilon_i = 1\), we deduce $\varepsilon_2 = 1 - \varepsilon_1 - \varepsilon_3$.
Using \(\mathbb{E}[X] = 0\), \(\mathrm{Var}[X] = 1\), and the ordering \(\mu_2 < \mu_3\), we obtain 
\[
\begin{aligned}
\mu_2 &= -\frac{\varepsilon_1}{1-\varepsilon_1}\,\mu_1 
        - \frac{1}{1-\varepsilon_1}
        \sqrt{\frac{\varepsilon_3}{1 - \varepsilon_1 - \varepsilon_3}}
        \,\sqrt{1-\varepsilon_1(1+\mu_1^2)} \\
\mu_3 &= -\frac{1}{\varepsilon_3}\,(\varepsilon_1 \mu_1 + \varepsilon_2 \mu_2).
\end{aligned}
\]

From the conditions \(1-\varepsilon_1(1+\mu_1^2) > 0\) and \(\mu_2 > \mu_1\), together with the fact that \(\mu_1 < 0\) in the model, we deduce

\begin{equation}\label{eq:mu1_interval}
\mu_1 \in
\left[
-\sqrt{\frac{1 - \varepsilon_1}{\varepsilon_1}},
\;
-\sqrt{\frac{\varepsilon_3}{1 - \varepsilon_3}}
\right].
\end{equation}

\begin{enumerate}
    \item We seek conditions on $\mu_1$ such that $s^* \in [\mu_1,\mu_2]$.
    By definition,
    \[
    s^* \in [\mu_1,\mu_2]
    \;\iff\;
    \Var[X - s_1^\dagger] \leq \Var[X - s_2^\dagger].
    \]
    By Proposition~\ref{th:exact}.1, this is equivalent to
    \[
    \Var[X - s_1] \leq \Var[X - s_2^\dagger],
    \]
    where, by Lemma~\ref{lem:sg},
    \[
    s_2^\dagger =
    \begin{cases}
    \mu_2, & \text{if } s_2 < \mu_2, \\
    s_2,   & \text{if } s_2 \geq \mu_2.
    \end{cases}
    \]
    
    Consequently,
    \[
    s^* \in [\mu_1,\mu_2]
    \;\iff\;
    \begin{cases}
    \Var[X - s_1] \leq \Var[X - \mu_2], & \text{if } s_2 < \mu_2, \\
    \Var[X - s_1] \leq \Var[X - s_2],   & \text{if } s_2 \geq \mu_2.
    \end{cases}
    \]

    Combining the cases $s_2 < \mu_2$ (see Lemma~\ref{lem:s2lower} below) and $s_2 \geq \mu_2$ (see Lemma~\ref{lem:s2greater} below), we conclude that
    \[
    s^* \in [\mu_1,\mu_2]
    \;\iff\;
    \mu_1 \leq \max\{U_1, U_2\}.
    \]
    \item When $s^* \in [\mu_1, \mu_2]$, we have $g^* = 1$.  
Applying Lemma~\ref{th:exact}.4 with $g^* = 1$, we have $\Phi^*(X) \geq 1$ if and only if
\[
\mu_1 \geq - \frac{\sqrt{3}}{2} \sqrt{\frac{1 - \varepsilon_1}{\varepsilon_1}} \geq - \sqrt{\frac{1 - \varepsilon_1}{\varepsilon_1}},
\]
which directly gives the expression of $F(\varepsilon_1)$.
\end{enumerate}

\end{proof}

\begin{lemma} \label{lem:s2lower}
This lemma is used in the proof of Proposition~\ref{prop:dftu3}.
\[
s^* \in [\mu_1,\mu_2] \;\text{and}\;s_2 < \mu_2
\;\iff\;
\mu_1 < U_1,
\]
where
\[
U_1
=
-\,\frac{2\varepsilon_2 + 4\varepsilon_1 \varepsilon_3}
{\sqrt{4\varepsilon_1 \varepsilon_3(\varepsilon_2 + 4\varepsilon_1 \varepsilon_3)}}
\sqrt{\frac{\varepsilon_3}{1 - \varepsilon_3}}.
\]
\end{lemma}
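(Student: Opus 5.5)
The plan is to reduce the left-hand side of the lemma to the single inequality $s_2<\mu_2$, and then solve that inequality explicitly in $\mu_1$, using the parametrization of $\mu_2,\mu_3$ by $(\mu_1,\varepsilon_1,\varepsilon_3)$ from the proof of Proposition~\ref{prop:dftu3}.

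\emph{Reduction.} Suppose $s_2<\mu_2$. By Lemma~\ref{lem:sg}, $s_2^\dagger=\mu_2$, so the minimum of $\Var|X-s|$ over $[\mu_2,\mu_3]$ equals $\Var|X-\mu_2|$. On $[\mu_1,\mu_2]$, Lemma~\ref{lem:var}(2.) with $g=1$ shows that $\Var|X-s|$ is a parabola with leading coefficient $4\eta_1(1-\eta_1)>0$ and vertex $s_1$. Two facts place the vertex inside this interval:
\begin{itemize}
\item we have $s_1=\mu_1\frac{1-2\varepsilon_1}{2(1-\varepsilon_1)}\ge\mu_1$, because $\mu_1<0$ and $\frac{1-2\varepsilon_1}{2(1-\varepsilon_1)}\le 1$;
\item we have $s_1<s_2<\mu_2$, by Lemma~\ref{lem:incrs}.
\end{itemize}
Hence $s_1^\dagger=s_1$ and $\Var|X-s_1|\le\Var|X-\mu_2|$. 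The latter value is well defined by continuity (Lemma~\ref{lem:parab}). So the smallest minimizer $s^*$ lies in $[\mu_1,\mu_2]$ automatically, and the left-hand side of the lemma is equivalent to $s_2<\mu_2$ alone. I would state this reduction first, since it removes all variance comparisons.

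\emph{Computing $s_2$.} Using $\E[X]=0$, we get $\alpha_2=-\varepsilon_3\mu_3$ and $\eta_2=1-\varepsilon_3$. Substituting into the formula of Lemma~\ref{lem:sg} gives
\[
s_2=\mu_3\,\frac{1-2\varepsilon_3}{2(1-\varepsilon_3)} .
\]
Replacing $\mu_3=-(\varepsilon_1\mu_1+\varepsilon_2\mu_2)/\varepsilon_3$, the condition $s_2<\mu_2$ becomes an inequality that is linear in $(\mu_1,\mu_2)$:
\[
-(1-2\varepsilon_3)(\varepsilon_1\mu_1+\varepsilon_2\mu_2)<2\varepsilon_3(1-\varepsilon_3)\mu_2 .
\]
Next I would insert the expression for $\mu_2$ from the proof of Proposition~\ref{prop:dftu3}. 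That expression contains $\sqrt{1-\varepsilon_1(1+\mu_1^2)}$ and the factor $\sqrt{\varepsilon_3/\varepsilon_2}$. After collecting terms, the inequality takes the form $A\mu_1<B\sqrt{1-\varepsilon_1(1+\mu_1^2)}$, with coefficients $A,B$ depending only on $\varepsilon_1,\varepsilon_3$.

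\emph{Main obstacle: solving this square-root inequality.} I would determine the signs of $A$ and $B$, using $\varepsilon_2=1-\varepsilon_1-\varepsilon_3$ and the admissible interval \eqref{eq:mu1_interval} for $\mu_1$. The aim is to get both sides of a known sign, so that squaring preserves the equivalence. Squaring should give a condition of the form $\mu_1^2>c(\varepsilon_1,\varepsilon_3)$. Since $\mu_1<0$, this is $\mu_1<-\sqrt{c}$. I would then check that $-\sqrt{c}$ simplifies to
\[
-\frac{2\varepsilon_2+4\varepsilon_1\varepsilon_3}{\sqrt{4\varepsilon_1\varepsilon_3(\varepsilon_2+4\varepsilon_1\varepsilon_3)}}\sqrt{\frac{\varepsilon_3}{1-\varepsilon_3}} .
\]
Finally I would verify that this threshold is compatible with \eqref{eq:mu1_interval}, truncating by the endpoints if necessary. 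The delicate points are the sign bookkeeping when $1-2\varepsilon_3$ changes sign and the algebraic simplification of $c$. If the direct route becomes messy, I would first reparametrize $\mu_1=-\sqrt{(1-\varepsilon_1)/\varepsilon_1}\cos\theta$, which turns the square root into $\sqrt{1-\varepsilon_1}\,\sin\theta$.
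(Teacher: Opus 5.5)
Your proposal is correct, and it reaches $U_1$ by a different route from the paper. The paper parametrizes the configuration by $\mu_3$ rather than $\mu_1$. Since $s_2$ depends on $\mu_3$ alone, the condition $s_2<\mu_2$ collapses at once to $\mu_3<2\sqrt{\varepsilon_1/\varepsilon_2}\,\sqrt{1-\varepsilon_3(1+\mu_3^2)}$, that is, $\mu_3<2\sqrt{\varepsilon_1(1-\varepsilon_3)/(\varepsilon_2+4\varepsilon_1\varepsilon_3)}$. The paper then asserts, without proof, that $\mu_1$ is nondecreasing in $\mu_3$ on the admissible range, and transports this threshold back to $\mu_1$. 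Your route stays in the variable in which the lemma is stated, so it needs no change of variables and no monotonicity claim, only the extra coefficient algebra. That algebra is cleaner than you feared: the sign issue with $1-2\varepsilon_3$ disappears. After inserting $\mu_2$ and multiplying by $1-\varepsilon_1$, your linear inequality becomes
\[
\varepsilon_1\varepsilon_3\,\mu_1<-(\varepsilon_2+2\varepsilon_1\varepsilon_3)\sqrt{\varepsilon_3/\varepsilon_2}\,\sqrt{1-\varepsilon_1(1+\mu_1^2)},
\]
where $\varepsilon_1\varepsilon_3>0$ and $\varepsilon_2+2\varepsilon_1\varepsilon_3>0$ whatever the sign of $1-2\varepsilon_3$. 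Squaring is an equivalence because $-\mu_1>0$. The identity $\varepsilon_1\varepsilon_2\varepsilon_3+(\varepsilon_2+2\varepsilon_1\varepsilon_3)^2=(1-\varepsilon_1)(1-\varepsilon_3)(\varepsilon_2+4\varepsilon_1\varepsilon_3)$ then turns the result into exactly $\mu_1^2>U_1^2$. No truncation is needed, since $\varepsilon_3/(1-\varepsilon_3)<U_1^2<(1-\varepsilon_1)/\varepsilon_1$. Your reduction step is also more complete than the paper's. The paper only observes that $\Var|X-s_1|\le\Var|X-\mu_2|$ holds once $s^*\in[\mu_1,\mu_2]$ is assumed. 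Your chain $\mu_1\le s_1<s_2<\mu_2$, via Lemma~\ref{lem:incrs}, is what actually gives the direction $s_2<\mu_2\Rightarrow s^*\in[\mu_1,\mu_2]$.
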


\begin{proof}

When $s^* \in [\mu_1,\mu_2]$, \(s_1\) minimizes \(\Var[X - s]\) over \(s \in [\mu_1,\mu_2]\), so the inequality
\[
\Var[X - s_1] \leq \Var[X - \mu_2]
\]
is always satisfied. It therefore remains to characterize the conditions under which
\(s_2 < \mu_2\). From Lemma~\ref{lem:sg},
\[
s_2 = -\dfrac{\mu_3}{2}\,\dfrac{2\varepsilon_3 - 1}{1 - \varepsilon_3}.
\]
Expressing $\mu_1$ and $\mu_2$ as functions of $\mu_3$ results in
\[
\mu_2
= -\dfrac{\varepsilon_3}{1 - \varepsilon_3}\mu_3
+ \dfrac{1}{1 - \varepsilon_3}
\sqrt{\dfrac{\varepsilon_1}{\varepsilon_2}}
\sqrt{1 - \varepsilon_3(1 + \mu_3^2)},
\]
\[
\mu_1
= -\dfrac{1}{1 - \varepsilon_3}
\Bigl(
\varepsilon_3 \mu_3
+ \sqrt{\dfrac{\varepsilon_2}{\varepsilon_1}}
\sqrt{1 - \varepsilon_3(1 + \mu_3^2)}
\Bigr).
\]
Imposing $\mu_2 < \mu_3$ and $1 - \varepsilon_3(1 + \mu_3^2) \geq 0$ yields
\begin{equation}\label{eq:mu3_interval}
\mu_3 \in
\left[
\sqrt{\frac{\varepsilon_1}{1 - \varepsilon_1}},
\;
\sqrt{\frac{1 - \varepsilon_3}{\varepsilon_3}}
\right].
\end{equation}
Solving $s_2 < \mu_2$, we obtain
\[
\mu_3
< 2 \sqrt{\dfrac{\varepsilon_1 (1 - \varepsilon_3)}
{\varepsilon_2 + 4 \varepsilon_1 \varepsilon_3}}.
\]
and the upper bound lies within the interval from \eqref{eq:mu3_interval}.
On this domain, $\mu_1$ is nondecreasing in $\mu_3$, so that
$s_2 < \mu_2
\;\iff\;
\mu_1 < U_1$.

\end{proof}

\begin{lemma} \label{lem:s2greater}
This lemma is used in the proof of Proposition~\ref{prop:dftu3}.
\[
s^* \in [\mu_1,\mu_2]
\;\text{and}\;
s_2 \geq \mu_2
\;\iff\;
\mu_1 \in [U_1, U_2],
\]
where
\[
U_2
=
- \frac{\sqrt{2}}{2}
\sqrt{
\frac{1 - \varepsilon_1}{\varepsilon_1}
+ \sqrt{\frac{1 - \varepsilon_1}{\varepsilon_1}}
\sqrt{\frac{\varepsilon_3}{1 - \varepsilon_3}}
}.
\]
\end{lemma}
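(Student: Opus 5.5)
In the regime $s_2 \ge \mu_2$, the clipped pivot on $[\mu_2,\mu_3]$ is $s_2^\dagger = s_2$. So the condition $s^*\in[\mu_1,\mu_2]$ becomes, by Proposition~\ref{th:exact}.1 and Lemma~\ref{lem:sg}, the inequality $\Var|X-s_1|\le \Var|X-s_2|$. I would turn this into a closed-form comparison and intersect it with the complement of the region found in Lemma~\ref{lem:s2lower}.

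The first step uses Lemma~\ref{lem:var}(2.) at the interior critical points. Evaluating at $s=s_g$ gives
\[
\Var|X-s_g| = 1-\frac{\alpha_g^2}{\eta_g(1-\eta_g)},
\]
which is also the content of Proposition~\ref{th:exact}.2. With $k=3$ we have $\alpha_1=\varepsilon_1\mu_1$, $\eta_1=\varepsilon_1$, and $\alpha_2=-\varepsilon_3\mu_3$, $\eta_2=1-\varepsilon_3$. The inequality $\Var|X-s_1|\le\Var|X-s_2|$ is therefore equivalent to
\[
\frac{\varepsilon_1\mu_1^2}{1-\varepsilon_1}\ \ge\ \frac{\varepsilon_3\mu_3^2}{1-\varepsilon_3}.
\]

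The second step eliminates $\mu_3$ so that only $\mu_1$ remains. I would use the parametrisation from the proof of Proposition~\ref{prop:dftu3}, which expresses $\mu_2$ and then $\mu_3=-(\varepsilon_1\mu_1+\varepsilon_2\mu_2)/\varepsilon_3$ in terms of $\mu_1$. Writing $A=\sqrt{(1-\varepsilon_1)/\varepsilon_1}$ and $B=\sqrt{\varepsilon_3/(1-\varepsilon_3)}$, and noting $\mu_1<0$, the inequality becomes $|\mu_1|/A\ge \mu_3 B$. Substituting $\mu_3(\mu_1)$ produces a relation between $\mu_1$ and $\sqrt{1-\varepsilon_1(1+\mu_1^2)}$. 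I would isolate the radical and square, after checking that both sides have the correct sign on the interval \eqref{eq:mu1_interval}. This should yield a quadratic in $t=\mu_1^2$. The target form of its relevant root is $2t = A^2 + AB$, i.e.\ $\mu_1 = U_2$. The algebra also has to show that the inequality holds exactly for $\mu_1\le U_2$. For this I would argue monotonicity: on \eqref{eq:mu1_interval}, the map $\mu_1\mapsto \varepsilon_1\mu_1^2/(1-\varepsilon_1)-\varepsilon_3\mu_3^2/(1-\varepsilon_3)$ is decreasing in $\mu_1$, since $|\mu_1|$ shrinks while $\mu_3$ grows as $\mu_1$ increases. As a sanity check on the endpoints, at $\mu_1=-A$ we have $\mu_2=\mu_3$ and the inequality should hold, and at $\mu_1=-B$ it should fail.

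The third step handles the side condition $s_2\ge\mu_2$. From the proof of Lemma~\ref{lem:s2lower}, $s_2<\mu_2$ is equivalent to $\mu_1<U_1$, using that $\mu_1$ is nondecreasing in $\mu_3$. Hence $s_2\ge\mu_2 \iff \mu_1\ge U_1$. Combining the two steps gives $\mu_1\in[U_1,U_2]$, which is empty when $U_1>U_2$. This is consistent with the $\max\{U_1,U_2\}$ appearing in Proposition~\ref{prop:dftu3}.

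The main obstacle is the squaring step. One has to verify on \eqref{eq:mu1_interval} that squaring introduces no spurious root, pick the correct root of the quadratic in $\mu_1^2$, and confirm that it simplifies exactly to $U_2$ (the other root should fall outside the admissible interval). I would first check this numerically on a few $(\varepsilon_1,\varepsilon_3)$ pairs to fix the signs. The sign conventions are a real risk, since the labels $U_1$ and $U_2$ appear swapped between Proposition~\ref{prop:dftu3} and the two lemmas.
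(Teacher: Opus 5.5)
Your proposal is correct and follows the paper's route: for $s_2\ge\mu_2$ you reduce $s^*\in[\mu_1,\mu_2]$ to $\Var|X-s_1|\le\Var|X-s_2|$, solve it explicitly to get $\mu_1\le U_2$, and reuse the argument of Lemma~\ref{lem:s2lower} for $s_2\ge\mu_2\iff\mu_1\ge U_1$. Your threshold $2\mu_1^2=A^2+AB$, your monotonicity claim and your endpoint checks on \eqref{eq:mu1_interval} are all correct, with two small corrections: after isolating the radical both sides are nonnegative and the squared relation is linear, not quadratic, in $t=\mu_1^2$, so there is no spurious root; and using the vertex value $1-\alpha_1^2/(\eta_1(1-\eta_1))$ tacitly assumes $s_1\le\mu_2$, which is harmless because if $s_1>\mu_2$ then $s_1\in(\mu_2,s_2)$, where the parabola of $[\mu_2,\mu_3]$ lies strictly below that of $[\mu_1,\mu_2]$, so the inequality fails exactly as $s^*=s_2$ requires.
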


\begin{proof}

Following the line of reasoning of the proof of Lemma~\ref{lem:s2lower},
\[
s_2 \geq \mu_2
\;\iff\;
\mu_1 \geq U_1.
\]
Moreover, solving the inequality
\[
\Var[X - s_1] \leq \Var[X - s_2]
\]
using the first variance expression from Lemma~\ref{lem:var}(2.) together with the explicit expressions of $s_1$ and $s_2$ from Lemma~\ref{lem:sg} yields
\[
\mu_1 \leq U_2.
\]
$U_2$ lies within the interval from \eqref{eq:mu1_interval}.
Therefore,
\[
s^* \in [\mu_1,\mu_2]
\;\text{and}\;
s_2 \geq \mu_2
\;\iff\;
\mu_1 \in [U_1, U_2].
\]

\end{proof}

\section{Additional Gaussian mixture illustrations}\label{app:gaussian}

In this section, we examine several 3-Gaussian mixtures and give supplementary figures, extending the analysis beyond Section 2.2 of the main paper. This additional study aims to confirm that the Gaussian mixture model reproduces the results obtained for the Dirac mixture model in cases where the SNR is sufficiently large.

Fig.~\ref{fig:gaussian_details} illustrates a 3-Gaussian mixture 
$X \sim \varepsilon_1 \cdot \mathcal{N}(\mu_1, \sigma^2) + \varepsilon_2 \cdot \mathcal{N}(\mu_2, \sigma^2) + \varepsilon_3 \cdot \mathcal{N}(\mu_3, \sigma^2)$, 
with $\sigma^2 \in \{ 0.0225, 0.0625, 0.25 \}$. The three configurations are as follows:
\begin{itemize}
    \item In Fig.~\ref{fig:gd1}: $(\varepsilon_1, \varepsilon_2, \varepsilon_3) = (1/3, 1/3, 1/3)$ and $(\mu_1, \mu_2, \mu_3) = (-2, 0, 2)$. It is an example where both $\Phi^*(X)$ and $\Phi^{**}(X)$ fail in the Dirac case ($\sigma^2=0$). 
    \item In Fig.~\ref{fig:gd2}: $(\varepsilon_1, \varepsilon_2, \varepsilon_3) = (0.2, 0.4, 0.4)$ and $(\mu_1, \mu_2, \mu_3) = (-2, 0, 2)$. It corresponds to Example 1.a in the main paper, where $\Phi^{**}(X)$ fails but $\Phi^{*}(X)$ does not. 
    \item In Fig.~\ref{fig:gd3}: $(\varepsilon_1, \varepsilon_2, \varepsilon_3) = (0.5, 0.2, 0.3)$ and $(\mu_1, \mu_2, \mu_3) = (-2, 0, 2.5)$. It illustrates a case where neither pivot fail in the Dirac case.
\end{itemize}

When both pivots fail (Fig.~\ref{fig:gd1}) or neither fails (Fig.~\ref{fig:gd3}) in the Dirac case, increasing $\sigma^2$ does not alter the conclusions of $\Phi^*(X)$ and $\Phi^{**}(X)$. 
When only the approximate pivot fails (Fig.~\ref{fig:gd2}), a sufficiently small $\sigma^2$ reproduces the Dirac results ($\Phi^*(X) < 1$ and $\Phi^{**}(X) > 1$).  
However, as $\sigma^2$ increases, $\Phi^*(X)$ exceeds one, leading to an incorrect unimodal conclusion.

\begin{figure}[htbp]
    \centering
    \begin{subfigure}[b]{\textwidth}
        \centering
        \includegraphics[width=\textwidth]{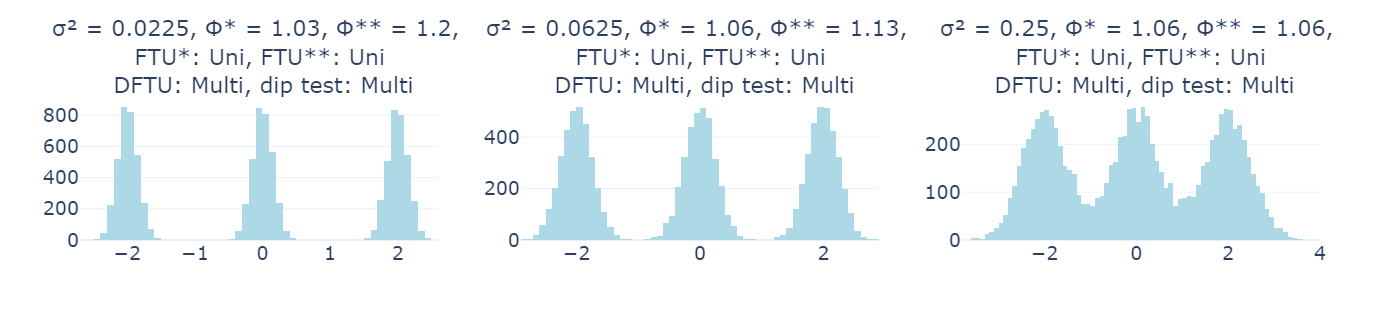}
        \caption{Both pivots fail in the Dirac case}
        \label{fig:gd1}
        \vspace{5mm}
    \end{subfigure}
    \begin{subfigure}[b]{\textwidth}
        \centering
        \includegraphics[width=\textwidth]{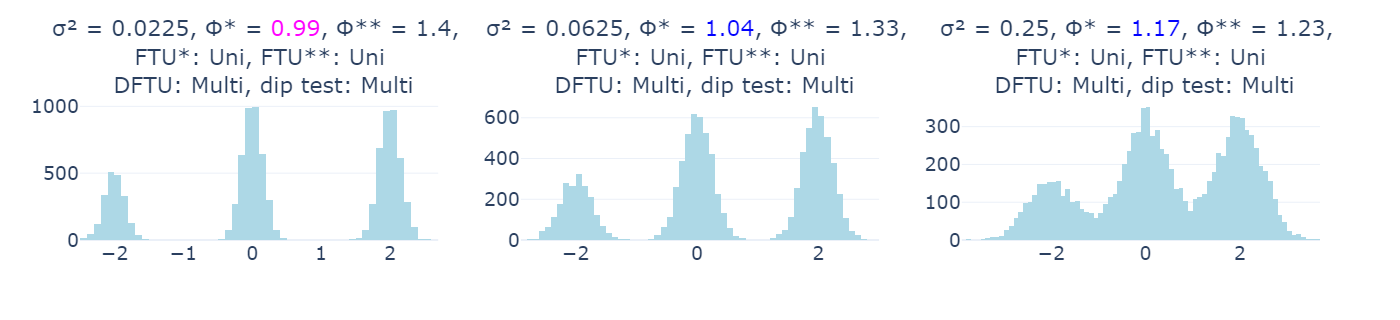}
        \caption{Only the approximate pivot fails in the Dirac case}
        \label{fig:gd2}
        \vspace{5mm}
    \end{subfigure}
    \begin{subfigure}[b]{\textwidth}
        \centering
        \includegraphics[width=\textwidth]{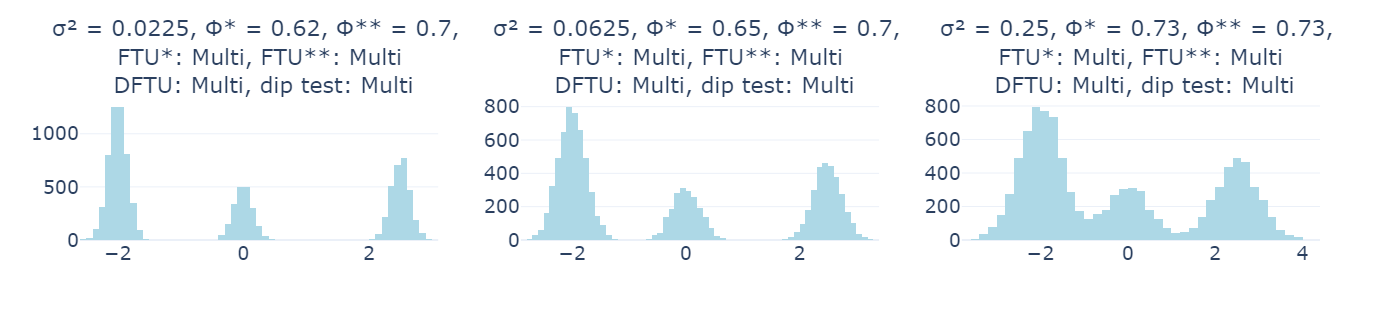}
        \caption{Neither pivot fails in the Dirac case}
        \label{fig:gd3}
    \end{subfigure}
    \caption{3-Gaussian mixtures under different configurations, together with the corresponding exact and approximate SFR ($\Phi^*$ and $\Phi^{**}$) and the results of unimodality tests. ``Uni” and ``Multi” denote unimodal and multimodal, respectively. Each column corresponds to a value of the mixture within variance $\sigma^2$. Each row corresponds to an example obtained in the Dirac case ($\sigma^2=0$): (a) both pivots fail, (b) only the approximate pivot fails, and (c) neither pivot fails.
    }
    \label{fig:gaussian_details}
\end{figure}

In all illustrated cases, both the DFTU and the dip test correctly detect multimodality, even when FTU$^*$ and FTU$^{**}$ fail to do so (Figs.~\ref{fig:gd1} and~\ref{fig:gd2}). These findings are consistent with the simulation results presented in Section 4 of the main paper.

\section{Details about the double-folding procedure}

\subsection{Application to the ``pathological'' example}\label{app:patho}

This section analyzes the application of the double-folding procedure to the ``pathological'' example, explains the use of the approximate pivot in the second step, and shows that the procedure correctly detects multimodality, as noted in Section 3.1 of the main paper.

The model is the following balanced and symmetric 3-Dirac mixture: 
$$ X \sim \tfrac{1}{3}\,\delta_{-a} + \tfrac{1}{3}\,\delta_{0} + \tfrac{1}{3}\,\delta_{a}. $$

The first step of the procedure corresponds to Example 8.1 from \cite{siffer_new_2019} with the exact pivot approach, yielding
$s_1^* = \pm \frac{a}{4}$ and $\Phi_1^*(X) = 1$. 
The first-step approximate pivot and associated SFR are
$s_1^{**} = 0$ and $\Phi_1^{**}(X) = \frac{4}{3} > 1$, as in Example 8.1.

Folding with an exact pivot yields the following distribution:
\[
\tilde{X}^* \sim \frac{1}{3} \delta_{\frac{1}{4}a} + \frac{1}{3} \delta_{\frac{3}{4}a} + \frac{1}{3} \delta_{\frac{5}{4}a}.
\] 
Since $\tilde{X}^* = \frac{1}{2} X + \frac{3}{4} a$ and the SFR are affine invariant, the unimodal conclusion of the first step persists in the second step.

However, folding with the approximate pivot, as done in the double-folding procedure, yields
\[
\tilde{X}^{**} \sim \frac{1}{3} \delta_0 + \frac{2}{3} \delta_a.
\] 
The folded distribution is no longer balanced or symmetric, leading to
$s_2^* = s_2^{**} = \frac{1}{4}$ and 
$\Phi_2^*(X) = \Phi_2^{**}(X) = 0$, 
which indicates multimodality.

\subsection{Details about the 3-Dirac mixture analysis}\label{app:3dirac}

In Section 3.2 of the main paper, we apply the double-folding procedure to a 3-Dirac mixture. This paragraph details the analysis, in particular the optimization procedure involved.

We consider a standardized random variable $X$ following a 3-Dirac mixture with locations $\mu_1 < \mu_2 < \mu_3$. In this case, when the minimizer of $\Var|X-s|$ is not unique, we take $s^*$ to be the smallest minimizer.
First, we analyze the first-step SFR. If the multimodality of the mixture is detected at this step, the procedure stops and returns the correct conclusion. We are therefore interested in the case where the first-step SFR fails, which can be fully characterized in terms of $\mu_1$, $\varepsilon_1$, and $\varepsilon_3$. 
According to Proposition~\ref{prop:dftu3}.1, the existence of $s_1^* \in [\mu_1, \mu_2]$ is equivalent to 
\[
\mu_1 \leq U(\varepsilon_1, \varepsilon_3), \quad
U(\varepsilon_1, \varepsilon_3) \coloneqq \max\{U_1, U_2\}
\] 
where 
$$U_1 = - \frac{\sqrt{2}}{2} \sqrt{\frac{1 - \varepsilon_1}{\varepsilon_1} + \sqrt{\frac{1 - \varepsilon_1}{\varepsilon_1}} \sqrt{\frac{\varepsilon_3}{1 - \varepsilon_3}}} \quad \text{and} \quad 
U_2 = - \frac{2\varepsilon_2 + 4\varepsilon_1 \varepsilon_3}{\sqrt{4\varepsilon_1 \varepsilon_3(\varepsilon_2 + 4\varepsilon_1 \varepsilon_3)}} \sqrt{\frac{\varepsilon_3}{1 - \varepsilon_3}}.
$$
Proposition~\ref{prop:dftu3}.2 then states that, when $s_1^* \in [\mu_1, \mu_2]$, the first-step SFR fails if and only if 
\[
\mu_1 \geq F(\varepsilon_1).
\]

Recall that the case $s_1^* \in [\mu_2,\mu_3]$ need not be analyzed separately. 
Indeed, by the affine invariance of the SFR, applying the transformation $X \mapsto -X$ maps this case to $s_1^* \in [\mu_1,\mu_2]$ while leaving the SFR unchanged. 
Therefore, it suffices to consider the case $s_1^* \in [\mu_1,\mu_2]$ without loss of generality.

To study the case 
\(\mu_1 \in [F(\varepsilon_1), U(\varepsilon_1, \varepsilon_3)]\), 
we rely on numerical optimization using SciPy’s SLSQP solver \citep{2020SciPy-NMeth}. The optimization variables are $\mu_1 \in [-10, 0]$, $\varepsilon_1 \in [0.01, 0.99]$, and $\varepsilon_3 \in [0.01, 0.99]$, from which $\varepsilon_2$, $\mu_2$, and $\mu_3$ can be deduced, as detailed in the proof of Proposition~\ref{prop:dftu3}.
We impose the following constraints:
\begin{itemize}
    \item $1 - \varepsilon_1 - \varepsilon_3 > 0$,
    \item $F(\varepsilon_1) \leq \mu_1 \leq U(\varepsilon_1, \varepsilon_3)$.
\end{itemize}
 
The first two optimization problems
\[
\min \Bigl[\gamma - \mu_1 - \mu_2 \Bigr],
\quad \text{and} \quad 
\min \Bigl[\mu_2 + \mu_3 - \gamma \Bigr] 
\]
with $\gamma = \E[X^3]$, both yield positive values when solved numerically. 
This implies that 
$$s^{**} = \frac{\gamma}{2} \in \Bigl(\frac{\mu_1+\mu_2}{2}, \frac{\mu_2+\mu_3}{2}\Bigr).$$

Let $\tilde{\mu}_1$, $\tilde{\mu}_2$, and $\tilde{\mu}_3$ denote the locations after the transformation $X \mapsto |X - s^{**}|$.  
The preceding result implies that $\tilde{\mu}_1 > \tilde{\mu}_2$ and $\tilde{\mu}_3 > \tilde{\mu}_2$, so $\tilde{\mu}_2$ is the smallest of the transformed locations.

\begin{tikzpicture}[>=Stealth]

  \coordinate (mu1) at (0,0);
  \coordinate (mu2) at (3,0);
  \coordinate (mu3) at (5,0);

  \draw[thick] ($(mu1)+(-1,0)$) -- ($(mu3)+(1,0)$);

  \foreach \p/\lab in {mu1/$\mu_1$, mu2/$\mu_2$, mu3/$\mu_3$} {
    \draw[fill=black] (\p) circle (1.5pt);
    \node[below=3pt] at (\p) {\lab};
  }

  \coordinate (m12) at ($(mu1)!0.5!(mu2)$);
  \coordinate (m23) at ($(mu2)!0.5!(mu3)$);

  \draw[pattern=north east lines, pattern color=gray] 
        (m12) rectangle ($(m23)+(0,0.15)$); 
  \node[above=2pt] at ($(m12)!0.5!(mu2)$) {$s^{**}$};

  \coordinate (target_interval) at ($(mu2)!0.8!(mu3)$);
  \draw[->, thick, bend left=30, blue] (mu1) to (target_interval);

  \coordinate (target_gt_mu3) at ($(mu3)+(1,0)$);
  \draw[->, thick, bend left=35, blue] (mu1) to (target_gt_mu3);

  \node[above=25pt, blue] at ($(mu1)!0.5!(mu3)$) {$|\mu_1 - s^{**}|+s^{**}$};

\end{tikzpicture}

To apply Proposition~\ref{prop:dftu3}.2 in the second step of the DFTU, we standardize and order the locations $\tilde{\mu}_1$, $\tilde{\mu}_2$, and $\tilde{\mu}_3$. 
The resulting locations are denoted by $\mu^{(2)}_{(1)}, \mu^{(2)}_{(2)}$, and $\mu^{(2)}_{(3)}$, with
$\mu^{(2)}_{(1)} < \mu^{(2)}_{(2)} < \mu^{(2)}_{(3)}$, and corresponding proportions  $\varepsilon^{(2)}_{(1)}, \varepsilon^{(2)}_{(2)}$, and $\varepsilon^{(2)}_{(3)}$. 

To apply Proposition~\ref{prop:dftu3}.2, we need to verify that $s^*_2 \in [\mu^{(2)}_{(1)}, \mu^{(2)}_{(2)}]$, which, according to Proposition~\ref{prop:dftu3}.1, is equivalent to $\mu^{(2)}_{(1)} \leq U\big(\varepsilon^{(2)}_{(1)}, \varepsilon^{(2)}_{(3)}\big)$. This leads to the third optimization problem
\[
\min \, \Bigl[ U\big(\varepsilon^{(2)}_{(1)}, \varepsilon^{(2)}_{(3)}\big) - \mu^{(2)}_{(1)} \Bigr]
\]
whose numerical solution is positive, thereby ensuring that $s^*_2 \in [\mu^{(2)}_{(1)}, \mu^{(2)}_{(2)}]$. 

The final optimization problem
\[
\min \Bigl[ \mu^{(2)}_{(1)} - F(\varepsilon^{(2)}_{(1)}) \Bigr]
\]
attains a positive value using a numerical solver, ensuring that $ \mu^{(2)}_{(1)} < F(\varepsilon^{(2)}_{(1)})$. According to Proposition~\ref{prop:dftu3}.2, this implies that the second-step SFR cannot fail when the first-step one has failed.

\subsection{DFTU algorithm}\label{app:algo}

See Algorithm~\ref{alg:dftu}.

\begin{algorithm}[ht!]
\caption{Algorithm of the DFTU}\label{alg:dftu}

\begin{algorithmic}

\Require Data matrix $X$, critical values $q_1$ and $q_2$ 
\Ensure Unimodal or Multimodal

\State $s^*_1 \gets \argmin_{s \in \mathbb{R}} \Var|X-s|$ \Comment{Exact pivot on $X$}
\State $\Phi^*_1(X) \gets 4 \, \Var|X - s^*_1| / \Var X$
\If{$\Phi^*_1(X) < q_1$}
\State \Return Multimodal

\Else
\State $s^{**}_1 \gets \argmin_{s \in \mathbb{R}} \Var (X-s)^2$ \Comment{Approximated pivot on $X$}
\State $\tilde{X} \gets |X - s^{**}_1|$ \Comment{Folding}
\State $ s^*_2 \gets \argmin_{s \in \mathbb{R}} \Var|\tilde{X}-s|$ \Comment{Exact pivot on $\tilde{X}$}
\State $\Phi^*_2(X) \gets 4 \, \Var|\tilde{X} - s^*_2| / \Var \tilde{X}$

\If{$\Phi^*_2(X) < q_2$}
    \State \Return Multimodal
    \Else
    \State \Return Unimodal
\EndIf

\EndIf

\end{algorithmic}
\end{algorithm}

\section{Details about the simulation study}\label{app:simu}

In this section, we provide details about the simulation study. The objective is to compare the results of different unimodality tests across various mixture models. All computations were done in Python 3.13. Replication files are available from: \url{https://github.com/cbecquart/DFTU-Replication}.

For each mixture with $k$ components, we generate 100 datasets, each containing 1,000 points. Specifically, we draw $1,000\; \varepsilon_i$ points from the distribution of the $i$-th component of the mixture model, for $i \in \{1, \ldots, k\}$.  For each dataset, we apply the following tests at a 5\% significance level: 

\begin{itemize}
    \item the FTU with the exact pivot, denoted FTU$^{*}$, partially based on an implementation provided by A.~Siffer (personal communication);
    \item the FTU with the approximate pivot, denoted FTU$^{**}$, implemented as a Python translation of the R code available in the \texttt{Rfolding} package \citep{siffer_rfolding_2018};
    \item the DFTU, implemented according to Algorithm~\ref{alg:dftu}, with parameter $\alpha_1$ set to 3\% and $\alpha_2$ deduced from $\alpha$ and $\alpha_1$;
    \item Hartigan’s dip test, implementation from the \texttt{diptest} library \citep{urlus_diptest_2022}. 
\end{itemize}

The critical values for all three folding-based tests are obtained via Monte Carlo simulation, as described in Section~3.1 of the main paper. For each test, we record the number of times unimodality is concluded across the 100 datasets. Table~\ref{tab:simuraw} reports these counts. Given the strong polarization of the results, values close to 100 are reported as ``Uni" (for unimodal) and those close to 0 as ``Multi" (for multimodal) in Table~1 of the main paper. Table \ref{tab:distrib} provides details about the distributions used in the simulation study. See also the comments in the main paper.

\begin{table}[ht]
\centering
\begin{tabular}{lcccc}
\hline
Distribution & FTU$^{*}$  & FTU$^{**}$ & DFTU & dip test \\
\hline
$\mathcal{N}(0, 1)$ & 100         & 100      & 100  & 100     \\
$0.5 \cdot (\mathcal{N}_0 + \mathcal{N}_1)$ & 100         & 100      & 100  & 100     \\
$0.6 \cdot \mathcal{N}_0 + 0.4 \cdot \mathcal{U}([4, 8])$ & 0           & 0        & 0    & 0       \\
$0.6 \cdot \mathcal{N}_0 + 0.4 \cdot \mathcal{U}([1, 4])$ & 1           & 1        & 1    & 100       \\

$1/3 \cdot \delta_{-2} + 1/3 \cdot \delta_{0} + 1/3 \cdot \delta_{2}$ & 100         & 100      & 0    & 0       \\
$1/3 \cdot \mathcal{N}_{-2} + 1/3 \cdot \mathcal{N}_{0} + 1/3 \cdot \mathcal{N}_{2}$ & 100         & 100      & 4    & 0       \\
$0.2 \cdot (\delta_{-3} + \delta_{-1.5} + \delta_{2.5} + \delta_{4} + \delta_{11})$ & 100         & 100      & 0    & 0       \\
$0.2 \cdot (\mathcal{N}_{-3} + \mathcal{N}_{-1.5} + \mathcal{N}_{2.5} + \mathcal{N}_{4} + \mathcal{N}_{11})$ & 100         & 100      & 0    & 0       \\
\hline
\end{tabular}
\caption{Simulation results. Counts of unimodality across the 100 datasets for the FTU with exact pivot (FTU$^{*}$), the FTU with approximate pivot (FTU$^{**}$), the DFTU, and the dip test. $\mathcal{N}_a = \mathcal{N}(a, 0.5)$.}
\label{tab:simuraw}
\end{table}

\begin{landscape}
\begin{table}
\centering

\begin{tabular}{T{6cm} C{3cm} T{3cm} T{4cm}}
\toprule
 Distribution & Histogram
 & ECDF
 &  Description \\
 \hline
 $\mathcal{N}(0, 1)$ & \includegraphics[width=\linewidth,height=3cm,keepaspectratio]{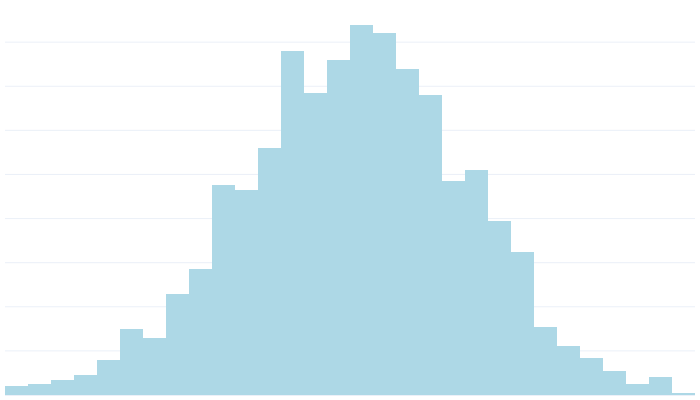}
 & \includegraphics[width=\linewidth,height=3cm,keepaspectratio]{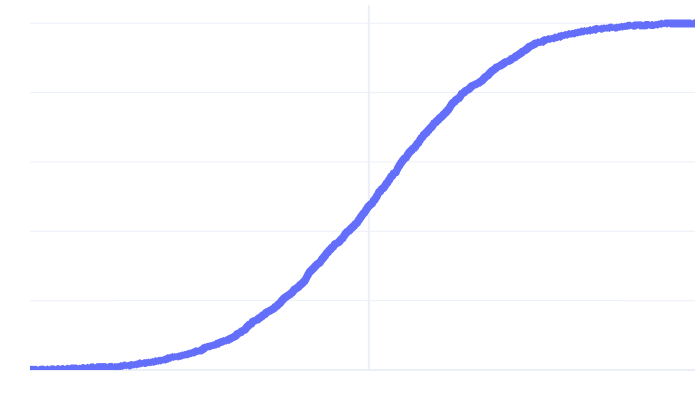}
 & Simple case of unimodality \\ 
 $0.5 \cdot (\mathcal{N}_0 + \mathcal{N}_1)$ & \includegraphics[width=\linewidth,height=3cm,keepaspectratio]{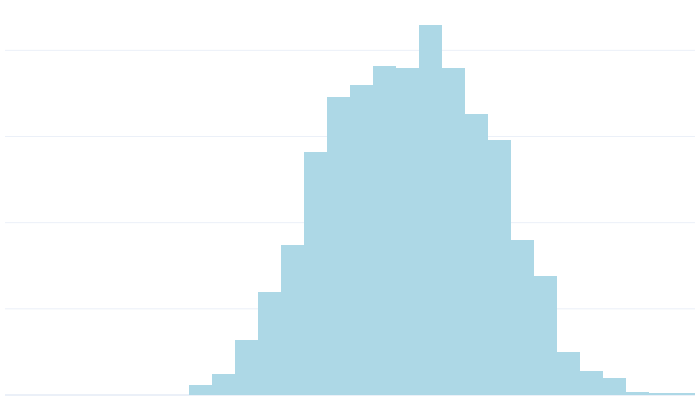}
 & \includegraphics[width=\linewidth,height=3cm,keepaspectratio]{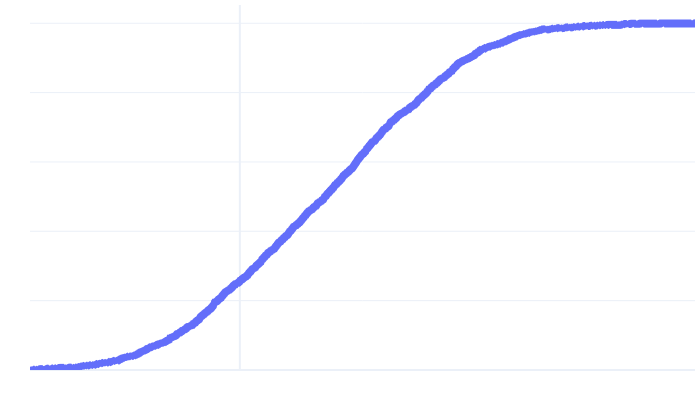}
 & 2-Gaussian mixture, not well separated  \\ 
 $0.6 \cdot \mathcal{N}_0 + 0.4 \cdot \mathcal{U}([4, 8])$ & \includegraphics[width=\linewidth,height=3cm,keepaspectratio]{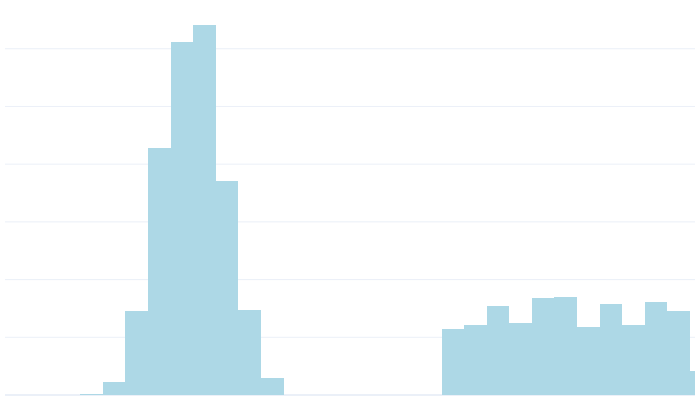}
 & \includegraphics[width=\linewidth,height=3cm,keepaspectratio]{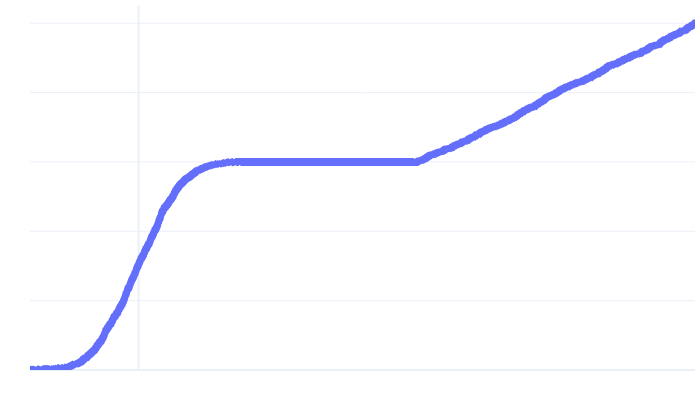}
 & Mixture of Gaussian and Uniform distributions, well separated \\ 
 $0.6 \cdot \mathcal{N}_0 + 0.4 \cdot \mathcal{U}([1, 4])$ & \includegraphics[width=\linewidth,height=3cm,keepaspectratio]{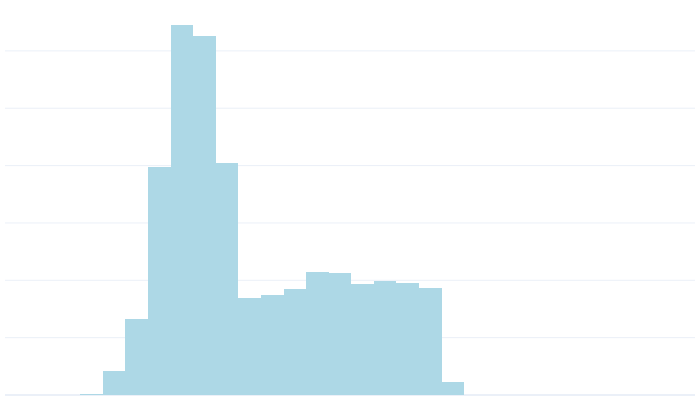}
 & \includegraphics[width=\linewidth,height=3cm,keepaspectratio]{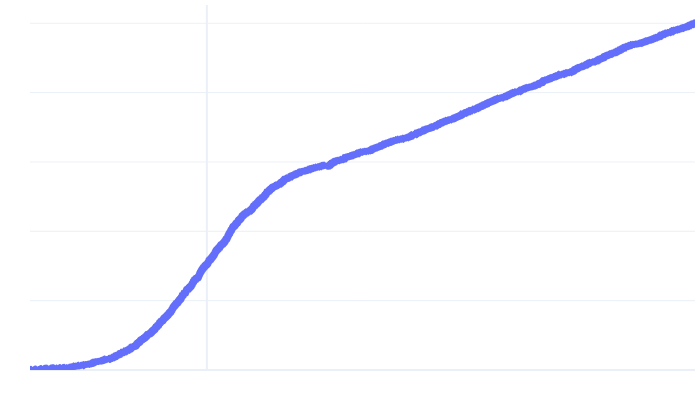}
 & Mixture of Gaussian and Uniform distributions, less separated \\
 $0.2 \cdot \delta_{-2} + 0.4 \cdot \delta_{0} + 0.4 \cdot \delta_{2}$ & \includegraphics[width=\linewidth,height=3cm,keepaspectratio]{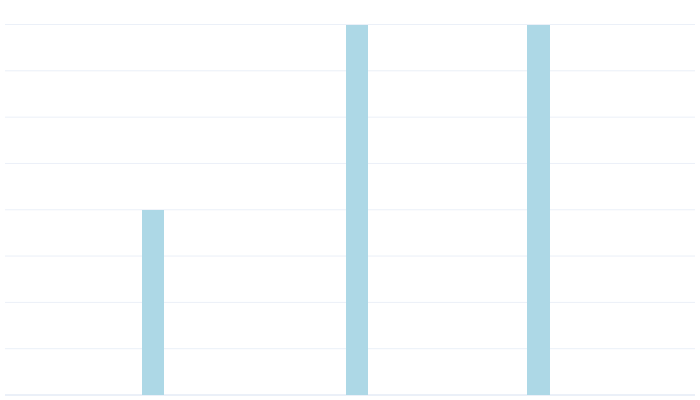}
 & \includegraphics[width=\linewidth,height=3cm,keepaspectratio]{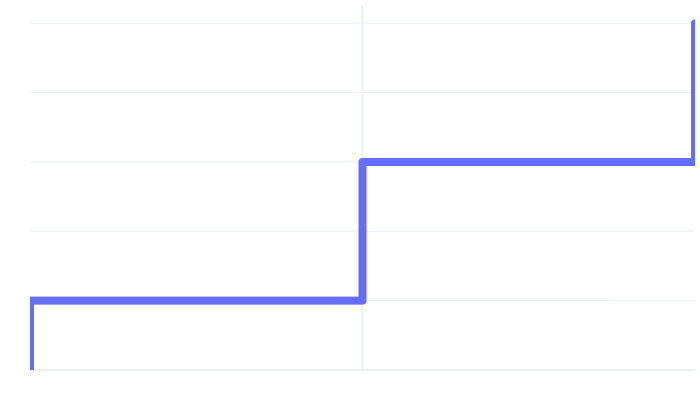}
 & 3-Dirac mixture \\
 $0.2 \cdot \mathcal{N}_{-2} + 0.4 \cdot \mathcal{N}_{0} + 0.4 \cdot \mathcal{N}_{2}$ & \includegraphics[width=\linewidth,height=3cm,keepaspectratio]{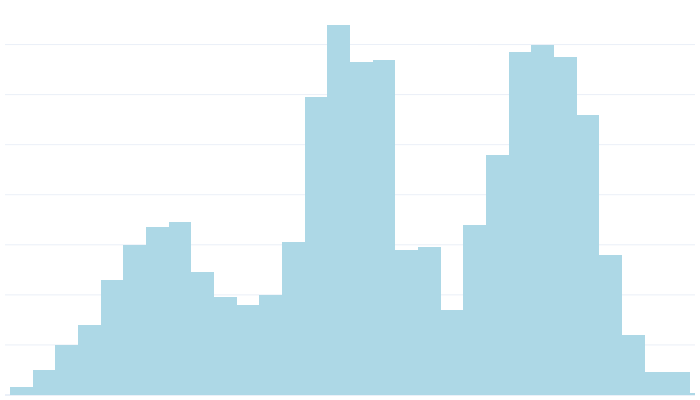}
 & \includegraphics[width=\linewidth,height=3cm,keepaspectratio]{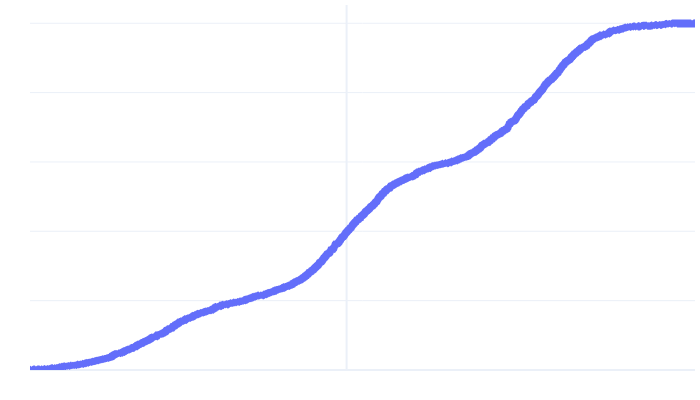}
 & 3-Gaussian mixture, well separated \\

\bottomrule
\end{tabular}

\caption{Details about the distributions used in the simulations.}
\label{tab:distrib}

\end{table}
\end{landscape}

\begin{landscape}
\begin{table}\ContinuedFloat
\centering

\begin{tabular}{T{6cm} C{3cm} T{3cm} T{4cm}}

\toprule
 Distribution & Histogram
 & ECDF
 &  Description \\
\hline
 $0.2 \cdot (\delta_{-3} + \delta_{-1.5} + \delta_{2.5} + \delta_{4} + \delta_{11})$ & \includegraphics[width=\linewidth,height=3cm,keepaspectratio]{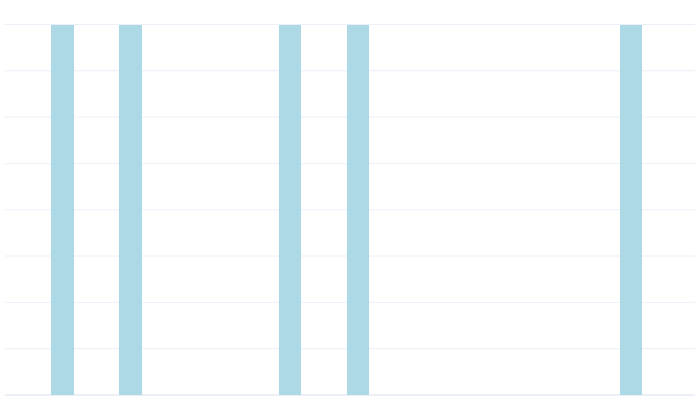}
 & \includegraphics[width=\linewidth,height=3cm,keepaspectratio]{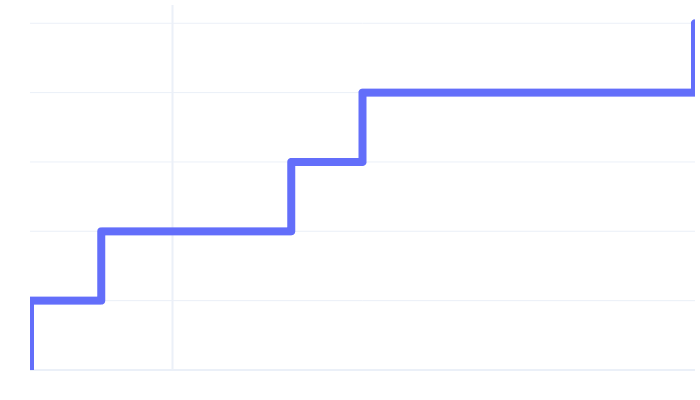}
 & 5-Dirac mixture \\
 $0.2 \cdot (\mathcal{N}_{-3} + \mathcal{N}_{-1.5} + \mathcal{N}_{2.5} + \mathcal{N}_{4} + \mathcal{N}_{11})$ & \includegraphics[width=\linewidth,height=3cm,keepaspectratio]{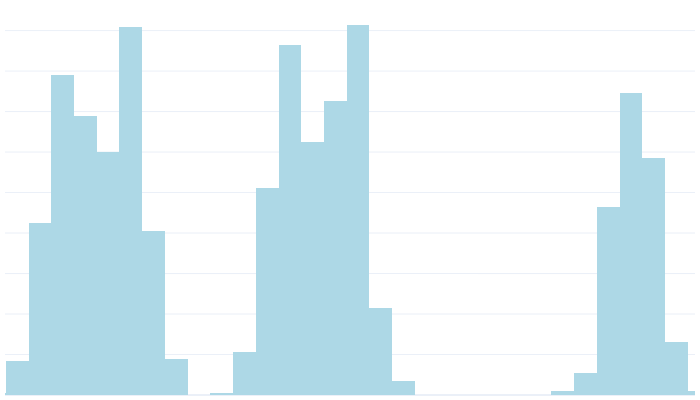}
 & \includegraphics[width=\linewidth,height=3cm,keepaspectratio]{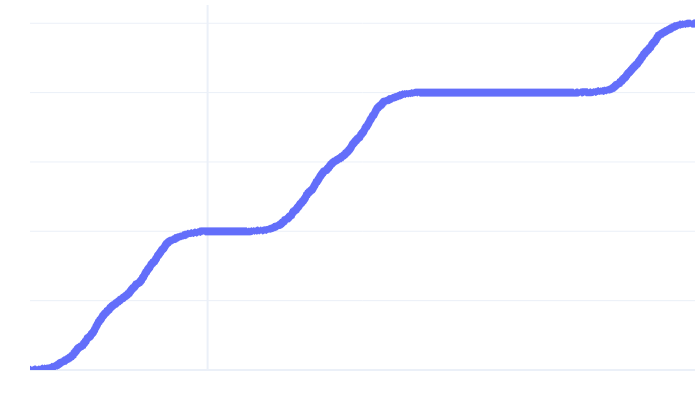}
 & 5-Gaussian mixture, well separated \\
\bottomrule
\end{tabular}

\caption{Details about the distributions used in the simulations (Continued).}
\end{table}
\end{landscape}

\end{appendix}

\section*{Acknowledgments}
We thank Alban Siffer for stimulating discussions and code about the FTU. 
A.M.R. and C.B. acknowledge funding from the French National Research Agency (ANR) under the Investments for the Future (Investissements d’Avenir) program, grant ANR-17-EURE-0010. Part of C.B.'s work was done during an Erasmus+ visit funded by the University of Toulouse at the University of Helsinki.

\bibliographystyle{unsrtnat}
\bibliography{references_v1}

\end{document}